\DeclareMathOperator*{\argmin}{arg\,min}
\newtheorem{theorem}{Theorem}
\newtheorem{lemma}{Lemma}
\newtheorem{proposition}{Proposition}
\newtheorem{example}{Example}
\newcommand{\overbar}[1]{\mkern 1.5mu\overline{\mkern-1.5mu#1\mkern-1.5mu}\mkern 1.5mu}
\begin{document}

\begin{frontmatter}

\title{Delete or Merge Regressors for Linear Model Selection} 
\runtitle{DMR for Linear Model Selection}

\begin{aug}
  \author{Aleksandra Maj-Ka\'nska \thanksref{t2} \ead[label=e1]{a.maj@phd.ipipan.waw.pl}}
 \address{
	Institute of Computer Science \\
	Polish Academy of Sciences \\
	Jana Kazimierza 5 \\
	01-248 Warsaw \\
	Poland \\
        \printead{e1}}
  \author{Piotr Pokarowski \thanksref{t3} \ead[label=e2]{pokar@mimuw.edu.pl}}
 \address{
	Faculty of Mathematics, Informatics and Mechanics \\
	University of Warsaw \\
	Banacha 2 \\
	02-097 Warsaw \\
	Poland \\
	\printead{e2}}
  \and
    \author{Agnieszka Prochenka \thanksref{t2}
  \ead[label=e3]{a.prochenka@phd.ipipan.waw.pl}}
 \address{
	Institute of Computer Science \\
	Polish Academy of Sciences \\
	Jana Kazimierza 5 \\
	01-248 Warsaw \\
	Poland \\
        \printead{e3}}

  \thankstext{t2}{Study was supported by research fellowship within "Information technologies: research and their interdisciplinary applications" agreement number POKL.04.01.01-00-051/10-00.}
 \thankstext{t3}{Study was supported by Polish National Science Center grant 2011/01/B/NZ2/00864}

  \runauthor{A. Maj-Ka\'nska et al.}

\end{aug}

\begin{abstract}
We consider a problem of linear model selection in the presence of both continuous and categorical predictors. Feasible models consist of subsets of numerical variables and partitions of levels of factors. A new algorithm called delete or merge regressors (DMR) is presented which is a stepwise backward procedure involving ranking the predictors according to squared t-statistics and choosing the final model minimizing BIC. In the article we prove consistency of DMR when the number of predictors tends to infinity with the sample size and describe a simulation study using a pertaining  \texttt{R} package. The results indicate significant advantage in time complexity and selection accuracy of our algorithm over Lasso-based methods described in the literature. Moreover, a  version of DMR for generalized linear models is proposed.
\end{abstract}

\begin{keyword}[class=MSC]
\kwd[Primary ]{62F07}
 \kwd[; secondary ]{62J07}
\end{keyword}

\begin{keyword}
\kwd{ANOVA}
\kwd{consistency}
\kwd{BIC}
\kwd{merging levels}
\kwd{t-statistic}
\kwd{variable selection}
\end{keyword}

\tableofcontents

\end{frontmatter}

\section{Introduction}

Model selection is usually understood as selection of continuous explanatory variables.  
However, when a categorical predictor is considered, in order to reduce model's complexity, 
we can either exclude the whole factor or merge its levels.

A traditional method of examining the relationship between a continuous response and categorical variables is analysis of variance (ANOVA). 
After detecting the overall importance of a factor, pairwise comparisons of group means are used to test significance of differences between its levels. 
Typically post-hoc analysis such as Tukey's honestly significant difference (HSD) test or multiple comparison adjustments (Bonferroni, Scheffe) are used. A drawback of pairwise comparisons is non-transitivity of conclusions. 

For example, let us consider data \texttt{barley} from \texttt{R} library \texttt{lattice} discussed already in \cite{bondell}. Total yield of barley for 5 varieties at 6 sites in each of two years  is modeled.  The dependence between the response and the varieties variable with the use of Tukey's HSD analysis (Figure~\ref{rystukey}) gives inconclusive answers:  $\beta_P = \beta_M$, $\beta_P = \beta_T$, but $\beta_T \neq \beta_M$. 

In this work we introduce a novel procedure called delete or merge regressors (DMR), which enables efficient search among partitions of factor levels, for which the issue of non-transitivity does not occur. If we apply DMR to the \texttt{barley} data, we get the following partition of varieties: $\{\{S,M,V,P \}, \{T \} \}$. Detailed description of the data set and the characteristics of the chosen model can be found in Section \ref{subsecCARS}.

\begin{figure}[!ht]
\centerline{%
\includegraphics[width=80mm]{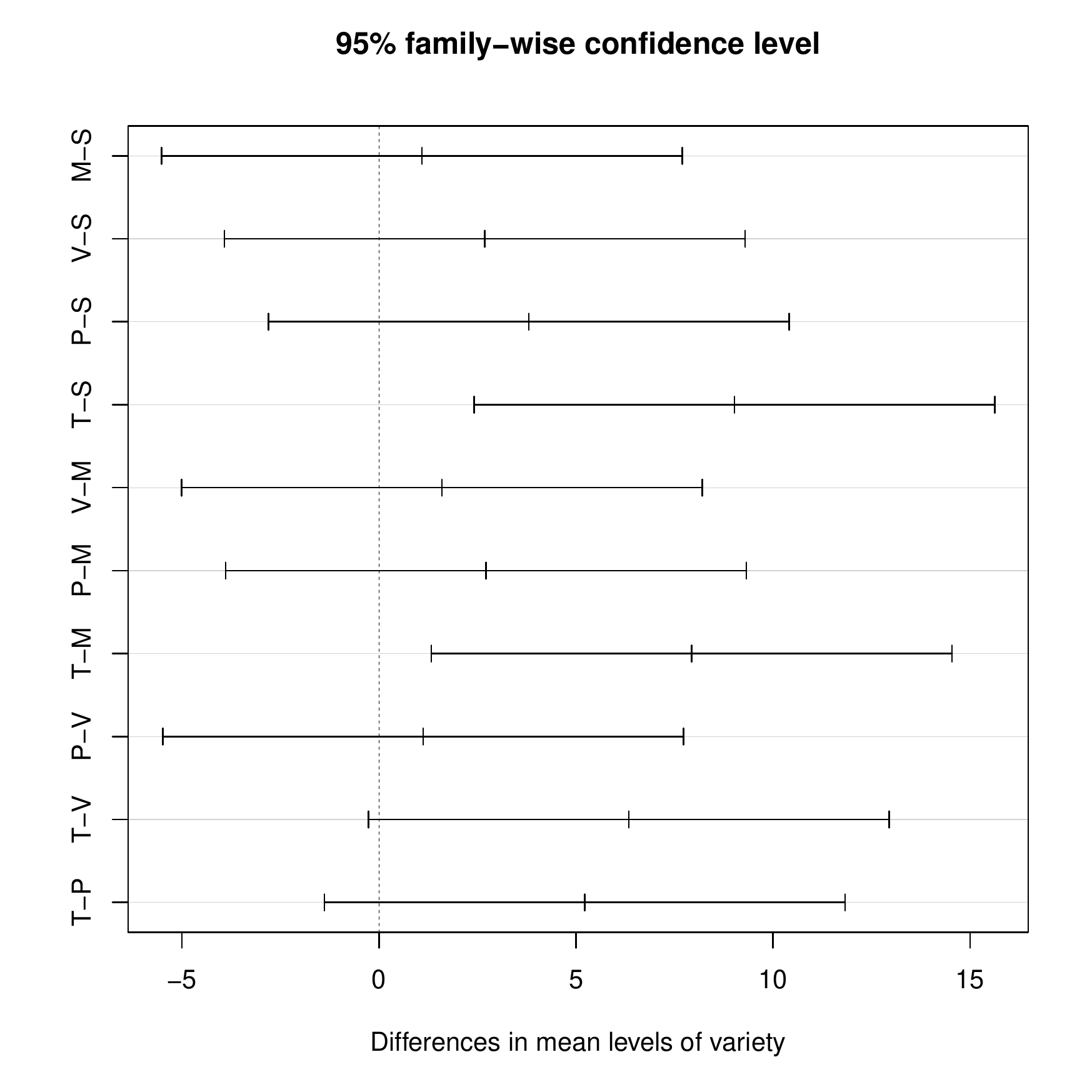}}
\caption{Results of Tukey's HSD.}\label{rystukey}
\end{figure}

The idea of partitioning a set of levels of a factor into non-overlapping groups has already been discussed in the literature. 
In the article \cite{tukey} a stepwise backward procedure based on the studentized range which gives grouping of means for samples from normal distributions was proposed. 
Other methods of clustering of sample means were described in \cite{scottknott}, where the set of means is partitioned from coarsest to finest, and in \cite{calinski} whose algorithm adapts hierarchical clustering to the problem. 
In more recent articles \cite{porreca} and \cite{ciampi} efficient algorithms for datasets partitioning using generalized likelihood ratio test can be found. 
However, all the mentioned methods assume an arbitrary choice of significance level for the underlying tests. 
In our procedure we avoid the problem by selecting the final partition according to minimal value of information criterion. 

Information criterion as an objective function for partition selection is used in the procedures described in \cite{dayton03}. 
Dayton's SAS procedure,
called paired comparisons information criteria (PCIC), computes AIC and BIC values for all ordered subsets of independent means for both homogeneous and heterogeneous models.
In contrast to DMR these methods do not allow for simultaneous factor partitioning and selection of continuous variables. 

A method introduced in \cite{bondell} called collapsing and shrinkage ANOVA (CAS-ANOVA) solves the same problem as DMR with use of the least absolute shrinkage and selection operator (Lasso; \cite{lasso}), where the $L_1$ penalty is imposed on differences between parameters corresponding to levels of each factor. 
This algorithm can be interpreted as a generalization of fused Lasso (\cite{fusedlasso}) to data with categorical variables. 
In \cite{tutz} one can find a modification of CAS-ANOVA, which is more computationally efficient because of using the least angle regression algorithm (LARS; \cite{lar}). 
Another algorithm, based on regularized model selection with categorical predictors and effect modifiers (\cite{tutzGLM}) is implemented in \texttt{R} package \texttt{gvcm.cat}. It generalizes Lasso approach to simultaneous factor partitioning and selection of continuous variables to generalized linear models. 
The algorithm is based on local quadratic approximation and iterated reweighted least squares. 

We propose a backward selection procedure called delete or merge regressors (DMR), which 
combines deleting continuous variables with merging levels of factors. 
The method employs a  greedy search among linear models with a set of constraints of two types: either a parameter for a continuous variable is set to zero or parameters corresponding to two levels of a factor are set to equal each other. In each step the choice of constraint is based on the order of squared t-statistics. 
As a result  a nested family of linear models is obtained and the final decision is made by minimization of Bayesian information criterion (BIC).
The method adapts agglomerative clustering, where squared t-statistics define the dissimilarity measure. 
This procedure generalizes concepts introduced in \cite{loh}  and \cite{ciampi} .  

In the article we show that DMR algorithm is a consistent model selection method under rather weak assumptions when $p$ tends to infinity with $n$. Furthermore, thanks to using a recursive formula for RSS in a nested family of linear models,  the time complexity of DMR algorithm is just $O(np^2)$. This makes the algorithm much faster than the competitive Lasso-based methods.
In the article we describe a simulation study and discuss a pertaining \texttt{R} package. The simulations show that DMR in comparison to adaptive Lasso methods described in the literature gives  better results in terms of accuracy without the troublesome choice of the $\lambda$ grid.

The remainder of the article proceeds as follows. The class of feasible models considered when performing model selection is defined in Section~\ref{FM}. DMR procedure is introduced in Section \ref{secALG}, while its asymptotic properties are discussed in Section \ref{secASY}. Simulations and real data examples are given in Section \ref{secSIMUL} to illustrate the method. All proofs are given in the Appendix.


\section{Feasible models}\label{FM}
In this section we first introduce some definitions regarding the form of the data and models considered. In particular, we define the set of feasible models, which are linear spaces of parameters with linear constraints and we show how by change of variables the constrained problem can be replaced by unconstrained one. Later we indicate that properties of OLS (ordinary least squares) estimators transfer to feasible models. 

\subsection{Definitions}
Let us consider data generated by a full rank linear model with $n$ observations and $p<n$ parameters:
\begin{equation}\label{model}
\bold{y} = \bold{X} \boldsymbol\beta^* + \boldsymbol\varepsilon = 
\bold{1} \beta_{00}^* + \bold{X}_0 \boldsymbol\beta_0^* + \bold{X}_1 \boldsymbol\beta_1^* + \ldots + \bold{X}_l \boldsymbol\beta_l^* + \boldsymbol\varepsilon,
\end{equation}
where: 
\begin{enumerate}
\item $\boldsymbol\varepsilon$ is a vector of iid zero-mean gaussian errors, $\boldsymbol\varepsilon \sim \mathcal{N} (\bold{0}, \sigma^2\mathds{I})$.
\item $\bold{X} = [ \bold{1},\bold{X}_0,\bold{X}_1,\ldots,\bold{X}_l]$ is a model matrix organized as follows:
$\bold{X}_0$ is a matrix corresponding to continuous regressors and
$\bold{X}_1,\ldots,\bold{X}_l$ are zero-one matrices encoding corresponding factors with the first level set as the reference.
\item $\boldsymbol\beta^* = [\beta_{00}^*,\boldsymbol\beta^{*T}_0,\boldsymbol\beta^{*T}_1, \ldots,\boldsymbol\beta^{*T}_l]^T \in \mathds{R}^p$ is a parameter vector 
organized as follows: $\beta_{00}^*$ is the intercept,
$\boldsymbol\beta_0^* = [\beta_{10}^*, \ldots, \beta_{p_00}^*]^T$ is a vector of coefficients for continuous 
variables and 
$\boldsymbol\beta_k^* = [\beta_{2k}^*, \ldots, \beta_{p_kk}^* ]^T$ is a vector of parameters corresponding 
to the $k$-th factor, $k = 1,\ldots,l$, hence the length of the parameter vector is $p =1+p_0+ (p_1-1)+ \ldots + (p_l-1)$. 
\end{enumerate} 

Denote sets of indexes: $N = \{ 0,1,\ldots, l \}$, $N_0 = \{0, 1, \ldots, p_0 \}$ and $N_k = \{ 2,3, \ldots, p_k \}$ for $k \in N \setminus \{ 0 \}$. Let us define an elementary constraint for linear model (\ref{model}) as a linear constraint of one of two types:
\begin{equation}\label{eqHi}
\mathcal{H}_{jk}: \ \beta_{jk}^*= 0 \text{ where } j \in N_k \setminus \{0\},\  k \in N,
\end{equation}
\begin{equation}\label{eqHij}
\mathcal{H}_{ijk}: \ \beta_{ik}^* = \beta_{jk}^* \text{ where }  i, j \in N_k,\ i \neq j,\  k \in N \setminus \{ 0 \}.
\end{equation}

A feasible model can be defined as a sequence $M = (P_{0},P_{1},...,P_{l})$, where $P_{0}$ denotes a subset of indexes of continuous variables and $P_{k}$ is a particular partition of levels of  the $k$-th factor. Such a model can be encoded by a set of elementary constraints. A set of all feasible models is denoted by $\mathcal{M}$. Let us denote model $F \in \mathcal{M}$ without constraints of types (\ref{eqHi}) or (\ref{eqHij}) as the full model.\\
\textbf{Example 1}. For illustration, let us consider a model with one factor and one continuous variable:
\[
\bold{y} = \bold{X} \boldsymbol\beta^* + \boldsymbol\varepsilon = \bold{1} \cdot 1 + \bold{X}_0 \cdot 2 + \bold{X}_1 \cdot  \left[\begin{array}{c}
-2 \\ 
-2  \\ 
0  \\ 
\end{array}\right]  +   \boldsymbol\varepsilon=
\]
\begin{equation}\label{illex}
=  \left[\begin{array}{c}
1\\
1\\
1\\
1\\
 1\\
 1\\
 1\\
 1\\
\end{array}\right] \cdot 1+ \left[\begin{array}{c}
-0.96\\
 -0.29\\
  0.26\\
 -1.15\\
  0.2\\
  0.03\\
 0.09\\
  1.12\\
\end{array}\right]  \cdot 2 +  \left[\begin{array}{ccc}
0 & 0 & 0  \\ 
0 & 0 & 0  \\ 
1 & 0 & 0  \\ 
1 & 0 & 0  \\ 
0 & 1 & 0  \\ 
0 & 1 & 0  \\ 
0 & 0 & 1  \\ 
0 & 0 & 1  \\ 
\end{array}\right] \left[\begin{array}{c}
-2 \\ 
-2  \\ 
0  \\ 
\end{array}\right]    +  \left[\begin{array}{c}
 -1.22 \\
1.27 \\
 -0.74 \\ 
-1.13\\
 -0.72\\
  0.25\\
  0.15\\
 -0.31\\
\end{array}\right]  ,
\end{equation}
where $ \bold{X}_0$ and $ \boldsymbol\varepsilon$ are vectors of length 8 generated independently from standard normal distribution, $ \mathcal{N} (\bold{0}, \mathds{I})$. Then $\beta^* = [1,2,-2,-2,0]^T$. The full model $F = (P_{0} = \{ 1 \},P_{1} =\{ \{ 1 \}, \{ 2 \}, \{ 3 \}, \{ 4 \} \})$ with $p_0 =1, p_1 =4, p=5$.  The model corresponding to $\beta^*$ is $(P_{0} = \{ 1 \},P_{1}= \{ \{ 1, 4 \}, \{ 2, 3 \} \})$ and is the same as $F$ with two elementary constraints: $\beta^*_{41} = 0$ and $\beta^*_{21} = \beta^*_{31} $.


\subsection{Unconstrained parametrization of feasible models}\label{subsecREGULAR}
A feasible model can be defined by a linear space of parameters
\begin{equation}\label{modcon}
\mathcal{L}_M = \left\{ \boldsymbol\beta \in \mathds{R}^p: \bold{A}_{0M} \boldsymbol\beta = 0 \right\},
\end{equation}
where $\bold{A}_{0M}$ is a $(p-q) \times p$ matrix encoding $q$ elementary constraints induced by the model.
Such a constraint matrix can be expressed in many ways. In particular, every linear space can be spanned by different  vectors. The number of such vectors can be greater than the dimension of the space when they are linearly dependent. In order to unify the form of  a constraint matrix, we introduce the notion of regular form, which is described in the Appendix~\ref{sec:regular}. We assume that $\bold{A}_{0M}$  is in regular form.
 Let $\bold{A}_{1M}$ be a $q \times p$ complement of $\bold{A}_{0M}$ to invertible matrix $A_M$, that is:
\[
\bold{A}_M = \left[\begin{array}{c}
\bold{A}_{1M} \\ \hline
\bold{A}_{0M} \\
\end{array}\right].
\]

 Denote:
\begin{equation}\label{eq:Admp}
\bold{A}_M^{-1} = \left[\begin{array}{ccc}
\bold{A}_M^1 & \vline & \bold{A}_M^0 \\
\end{array}\right],
\end{equation}
where $\bold{A}_M^1$ is a $p \times q$ matrix.
In order to replace a constrained by an unconstrained parametrization change of variables in model $M$ is performed.  Let $\boldsymbol\beta_M \in \mathcal{L}_M$ and $\boldsymbol\xi_M = \bold{A}_{1M}\boldsymbol\beta_M$. We have:
\begin{equation}\label{betaM}
\boldsymbol\beta_M = \bold{A}_M^1 \boldsymbol\xi_M.
\end{equation}
Indeed,
\[
\boldsymbol\beta_M = \bold{A}_M^{-1} \bold{A}_M \boldsymbol\beta_M =  \bold{A}_M^{-1} \left[\begin{array}{c}
\bold{A}_{1M} \boldsymbol\beta_M \\ \hline
\bold{A}_{0M} \boldsymbol\beta_M  \\  
\end{array}\right]  = \left[\begin{array}{ccc}
\bold{A}_M^1 & \vline & \bold{A}_M^0 \\
\end{array}\right]  \left[\begin{array}{c}
\boldsymbol\xi_{M} \\ \hline
\bold{0}\\  
\end{array}\right] = \bold{A}_M^1  \boldsymbol\xi_{M}.
\]
From equation (\ref{betaM}) we obtain $\bold{X} \boldsymbol\beta_M = \bold{Z}_{1M} \boldsymbol\xi_M$, where  $\bold{Z}_{1M} = \bold{X} \bold{A}^1_M$ and $\mathcal{L}_M = \{ \bold{A}_M^1 \boldsymbol\xi: \boldsymbol\xi \in~\mathds{R}^{q}\} $.
Let us notice that $\mathcal{L}_M$ is a linear space spanned by columns of $\bold{A}_{M}^1$. 
The dimension of space $\mathcal{L}_M$ will be called the size of model $M$ and denoted by $|M|$. Note that $|M| = q$. \\
\textbf{Example 1 continued}.
Matrices $\bold{A}_M, \bold{Z}_{1M}$ and $\boldsymbol\xi_M$ are:
\[
\bold{A}_M =   \left[\begin{array}{ccccc}
1 & 0 & 0 & 0 & 0  \\
0 & 1 & 0 & 0 & 0  \\
0 & 0 & 1 & 0 & 0 \\
\hline
0 & 0 & -1 & 1 & 0  \\
0 & 0 & 0 & 0 & 1  \\
\end{array}\right], \ \bold{A}_M^1 = \left[\begin{array}{ccc}
1 & 0 & 0 \\
0 & 1 & 0  \\
0 & 0 & 1 \\
0 & 0 & 1   \\
0 & 0 & 0 \\
\end{array}\right]
\]
$$
\bold{Z}_{1M} =  \left[\begin{array}{ccc}
1 & -0.96 & 0 \\ 
1& -0.29 & 0 \\ 
1 & 0.26 & 1 \\ 
1 & -1.15 & 1 \\ 
1 & 0.2 & 1   \\ 
1 & 0.03  & 1   \\ 
1 & 0.09 & 0  \\ 
1 & 1.12 & 0  \\ 
\end{array}\right] \ ,\  \boldsymbol\xi_M = (\xi_1, \xi_2, \xi_3)^T \ ,\  \xi_1 = \beta _{00}^*,\ \xi_2 = \beta_{10}^*\ ,\ \xi_3 = \beta_{21}^* = \beta_{31}^*.
$$

One can see that a change from constrained to unconstrained problem was done by adding and deleting columns of the model matrix.
 
 The OLS estimator of $\boldsymbol\beta^*$ constrained to $\mathcal{L}_M$ is given by the following expression:
\begin{equation}\label{eq:betaM}
\widehat{\boldsymbol\beta}_M = \bold{A}_M^1 \widehat{\boldsymbol\xi}_M, \text{ where } \widehat{\boldsymbol\xi}_M = \left( \bold{Z}_{1M}^T \bold{Z}_{1M} \right)^{-1} \bold{Z}_{1M}^T \bold{y}.
\end{equation}
Note that $\bold{A}_{0M}\widehat{\boldsymbol\beta}_M = \bold{A}_{0M}\bold{A}^1_M \widehat{\boldsymbol\xi}_M = 0$ and thus indeed $\widehat{\boldsymbol\beta}_M \in  \mathcal{L}_{M}$.
We define the inclusion relation between two models $M_1$ and $M_2$ by inclusion of linear spaces 
\begin{equation}\label{eqINCL}
M_1 \subseteq M_2 \text{ denotes } \mathcal{L}_{M_1} \subseteq \mathcal{L}_{M_2}
\end{equation}
and intersection of two models $M_1$ and $M_2$ by intersection of linear spaces:
\begin{equation}\label{eqCAP}
M_1 \cap M_2 \text{ as a model defined by } \mathcal{L}_{M_1} \cap \mathcal{L}_{M_2}.
\end{equation}
A feasible model $M$ will be called a true model if $\boldsymbol\beta^* \in \mathcal{L}_{M}$. A true model with minimal size will be denoted by $T$. Observe that $T$ is unique because $\bold{X}$ is a full rank matrix. \\
\textbf{Example 1 continued}. For the illustrative example the true model $T$ is $T = ( \{ 1 \}, \{ \{ 1, 4 \}, \{ 2, 3 \} \})$. The dimensions of the considered models are $|F| = p = 5$, $|T| = 3$.

\subsection{Residual sum of squares and generalized information criterion for feasible models}
Let $\bold{H}_M = \bold{Z}_{1M} \left( \bold{Z}_{1M}^T\bold{Z}_{1M} \right)^{-1} \bold{Z}_{1M}^T$. Observe that $\bold{H}_M \bold{X}\boldsymbol\beta^* = \bold{X}\boldsymbol\beta^*$ for $ M \supseteq T$ .
We define residual sum of squares for model $M$ as $RSS_M = \| \bold{y} - \bold{X}\widehat{\boldsymbol\beta}_M \|^2$. 
 From equation~(\ref{eq:betaM}) we have:
\[
RSS_M =  \| \bold{y} - \bold{Z}_{1M}\widehat{\boldsymbol\xi}_M \|^2 =  \|(\mathbb{I} - \bold{H}_M) \bold{y}  \|^2.
\]
Let us denote:
\begin{equation}\label{eq:Delta}
\Delta_M = \boldsymbol\beta^{*T} \bold{X}^T (\mathbb{I} - \bold{H}_M) \bold{X} \boldsymbol\beta^{*}
 = \| \bold{X}\boldsymbol\beta^* - \bold{X}\boldsymbol\beta_M^* \|^2,
\end{equation}
where $\boldsymbol\beta_M^*  = \argmin_{\boldsymbol\beta \in \mathcal{L}_M}   \| \bold{X}\boldsymbol\beta^* - \bold{X}\boldsymbol\beta\|^2$. Notice that $\widehat{\boldsymbol\beta}_M \xrightarrow[]{\ P \ } \boldsymbol\beta^*_M$ with $n \rightarrow \infty$.
The following decomposition of RSS in linear models is trivial, hence we omit the proof:
\begin{proposition}
\[ 
RSS_M = \Delta_M +  2 \boldsymbol\beta^{*T} \bold{X}^T (\mathbb{I} - \bold{H}_M) \boldsymbol\epsilon +   \boldsymbol\epsilon^T  (\mathbb{I} - \bold{H}_M) \boldsymbol\epsilon.
\]
In particular for $ M \supseteq T$ 
\[
 RSS_M = \boldsymbol\epsilon^T  (\mathbb{I} - \bold{H}_M) \boldsymbol\epsilon  \sim \sigma^2 \chi^2_{n - |M|}.
\]
\end{proposition}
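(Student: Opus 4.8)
The plan is to substitute the data-generating equation $\bold{y} = \bold{X}\boldsymbol\beta^* + \boldsymbol\epsilon$ into $RSS_M = \|(\mathbb{I}-\bold{H}_M)\bold{y}\|^2$ and expand the square. Writing $\bold{P}_M := \mathbb{I}-\bold{H}_M$, the one structural fact needed is that $\bold{H}_M = \bold{Z}_{1M}(\bold{Z}_{1M}^T\bold{Z}_{1M})^{-1}\bold{Z}_{1M}^T$ is the orthogonal projection onto the column space of $\bold{Z}_{1M}$, so $\bold{P}_M$ is symmetric and idempotent and in particular $\bold{P}_M^T\bold{P}_M = \bold{P}_M$. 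Expanding $\|\bold{P}_M(\bold{X}\boldsymbol\beta^* + \boldsymbol\epsilon)\|^2$ then yields exactly three terms: $\boldsymbol\beta^{*T}\bold{X}^T\bold{P}_M\bold{X}\boldsymbol\beta^*$, which equals $\Delta_M$ by definition~(\ref{eq:Delta}); the cross term $2\boldsymbol\beta^{*T}\bold{X}^T\bold{P}_M\boldsymbol\epsilon$; and $\boldsymbol\epsilon^T\bold{P}_M\boldsymbol\epsilon$. This is precisely the claimed identity, so the first display needs nothing beyond this one-line computation.

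For $M\supseteq T$, the excerpt already records $\bold{H}_M\bold{X}\boldsymbol\beta^* = \bold{X}\boldsymbol\beta^*$, hence $\bold{P}_M\bold{X}\boldsymbol\beta^* = \bold{0}$, which annihilates both $\Delta_M$ and the cross term and leaves $RSS_M = \boldsymbol\epsilon^T\bold{P}_M\boldsymbol\epsilon$. For the distribution I would argue as follows. Since $\bold{Z}_{1M}^T\bold{Z}_{1M}$ is invertible (it appears inverted in~(\ref{eq:betaM})), $\bold{Z}_{1M}$ has full column rank equal to its number of columns $q = |M|$, so $\bold{P}_M$ is an orthogonal projection of rank $n - |M|$. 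Diagonalizing $\bold{P}_M = \bold{Q}\,\mathrm{diag}(\mathbb{I}_{n-|M|},\bold{0})\,\bold{Q}^T$ for an orthogonal $\bold{Q}$ and using rotational invariance of the isotropic Gaussian, $\bold{Q}^T\boldsymbol\epsilon \sim \mathcal{N}(\bold{0},\sigma^2\mathbb{I})$, one obtains $RSS_M = \boldsymbol\epsilon^T\bold{P}_M\boldsymbol\epsilon = \sigma^2\sum_{i=1}^{n-|M|}g_i^2$ with $g_i$ iid $\mathcal{N}(0,1)$, i.e. $RSS_M \sim \sigma^2\chi^2_{n-|M|}$.

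As the authors note, there is no genuine obstacle here; this is the textbook Fisher--Cochran decomposition of the residual sum of squares. The only two points worth stating explicitly are that $\bold{P}_M$ is a true orthogonal projection --- so that the square expands without leftover cross terms and its rank is exactly $n-|M|$ rather than merely bounded by it --- which follows from the closed form of $\bold{H}_M$ and the invertibility of $\bold{Z}_{1M}^T\bold{Z}_{1M}$, and the standard fact that a quadratic form $\bold{z}^T\bold{P}\bold{z}$ in a standard Gaussian vector $\bold{z}$, with $\bold{P}$ an orthogonal projection of rank $r$, has the $\chi^2_r$ distribution.
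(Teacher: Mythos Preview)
Your proof is correct and is exactly the standard argument one would write; the paper itself omits the proof entirely, calling the decomposition ``trivial,'' so there is nothing to compare against beyond noting that your expansion via the idempotent projector $\mathbb{I}-\bold{H}_M$ and the Cochran-type $\chi^2$ identification are the expected route.
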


Therefore, the predictions for constrained problem can be obtained through projecting the observations on the space spanned by columns of the model matrix for the equivalent unconstrained problem. Hence, decompositions and asymptotic properties of residual sums of squares for feasible models are inherited from unconstrained linear models. 

Bayes Information Criterion for model $M$ is defined as:
\[
BIC_M= n \log RSS_M  + \log(n) |M|.
\] 
The goal of our method is to find the best feasible model according to BIC, taking into account that the number of feasible models grows exponentially with $p$. Since for the $k$-th factor number of possible partitions is the Bell number $\mathcal{B}(p_k)$, the number of all feasible models is $2^{p_0} \prod_{k = 1}^{l} \mathcal{B}(p_k)$. In order to significantly reduce the amount of computations, we propose a greedy backward search.

\section{DMR algorithm}\label{secALG}
In this section we introduce  DMR algorithm. Because of troublesome notations, in order to make the description of the algorithm more intuitive, we present here  a general idea of the algorithm. In particular, we give the details of step 3 of the algorithm in the Appendix~\ref{detDMR}.

 Assuming that $\bold{X}$ is of full rank the QR decomposition of the model matrix is $\bold{X} = \bold{Q}\bold{ R}$,
where $\bold{Q}$ is $n \times p$ orthogonal matrix and $\bold{R}$ is $p \times p$ upper triangular matrix. Denote minimum variance unbiased estimators of  $\boldsymbol\beta$ and $\sigma^2$ for the full model $F$ as:
\begin{equation}\label{eqz}
\widehat{\boldsymbol\beta} = \bold{R}^{-1}\bold{z} \text{ and } \widehat{\sigma}^2 = \frac{\| \bold{y} \|^2 - \| \bold{z} \|^2}{n-p} \text{, where } \bold{z} = \bold{Q}^T \bold{y}.
\end{equation}
Let us denote
\[
\widehat{\boldsymbol\beta} = [\widehat{\beta}_{jk}]_{\begin{tiny} \begin{matrix} j \in N_k \\ k \in N \end{matrix} \end{tiny}}, \ \bold{R}^{-1} = [r_{jk, st}]_{\begin{tiny} \begin{matrix} j \in N_k \\ s \in N_t \\ k,t \in N \end{matrix} \end{tiny}},
\]
then
\[
 \widehat{\beta}_{jk} =  \bold{r}_{jk}^T \bold{z}, \text{ where } j \in N_k, k \in N
\]
and $\bold{r}_{jk}$ is a row of $\bold{R}^{-1}$.

\begin{algorithm}[H]
   \caption{DMR (Delete or Merge Regressors)}
   \label{alg:DMR}
\begin{algorithmic}
   \STATE {\bfseries Input:} $y$, $X$
   \STATE {\bfseries 1. Computation of t-statistics}
   \STATE {Compute the QR decomposition of the full model matrix, obtaining matrix $\bold{R}^{-1}$, vector $\bold{z}$ and variance estimator $\widehat{\sigma}^2$ as in equation (\ref{eqz}). Calculate squared t-statistics:}
\begin{enumerate}
\item for all elementary constraints  defined in (\ref{eqHi}): 
\[
t^2_{1jk} = \frac{\widehat{\beta}^2_{jk}}{\widehat{Var}(\widehat{\beta}_{jk})} = \frac{(\bold{r}_{jk}^T \bold{z})^2}{\widehat{\sigma}^2 \| \bold{r}_{jk} \|^2}~~\text{for}~~j \in N_k \setminus \{0\},\ k \in N,
\] 
\item for all elementary constraints defined in (\ref{eqHij}): 
\[
t^2_{ijk}  = \frac{(\widehat{\beta}_{ik} - \widehat{\beta}_{jk})^2}{\widehat{Var}(\widehat{\beta}_{ik} - \widehat{\beta}_{jk})} = \frac{((\bold{r}_{ik} - \bold{r}_{jk})^T z)^2}{\widehat{\sigma}^2 \| \bold{r}_{ik} - \bold{r}_{jk} \|^2}
\]
for  $i,j \in N_k$, $i \neq j$, $k \in N \setminus \{ 0 \}$. 
\end{enumerate}
\STATE {\bfseries 2. Agglomerative clustering for factors (using complete linkage clustering)}
\STATE{ For each factor perform agglomerative clustering using $\bold{D}_k = \left[ d_{ijk} \right]_{ij}$ as dissimilarity matrix for $k \in N \setminus \{ 0 \}$:}
\begin{enumerate}
\item $d_{1jk} = d_{j1k} = t^2_{1jk}$ for $j \in N_k$,
\item $d_{ijk} = t^2_{ijk}$ for $i, j \in N_k$, $i \neq j$,
\item $d_{iik} = 0 $ for $i \in N_k$.
\end{enumerate}
We denote cutting heights obtained from the clusterings as $ \bold{h}_1^T,  \bold{h}_2^T, \ldots, \bold{h}_l^T$.
\STATE {\bfseries 3. Sorting constraints (hypotheses) according to the squared t-statistics}
\STATE{Combine vectors of cutting heights: $\bold{h} = [0, \bold{h}_0^T, \bold{h}_1^T, \ldots, \bold{h}_l^T]^T$, where $\bold{h}_0$ is vector of squared t-statistics for constraints concerning continuous variables and $0$ corresponds to the full model. Sort elements of $\bold{h}$ in increasing order and construct a corresponding $(p-1) \times p$ matrix $\bold{A}_0$ of consecutive constraints.}
\STATE {\bfseries 4. Computation of RSS using a recursive formula in a nested family of models}
\STATE{ Perform QR decomposition of the matrix $\bold{R}^{-T} \bold{A}_0^T$ obtaining the orthogonal matrix 
$\bold{W} = [\bold{w}_{1}, \ldots, \bold{w}_{p-1}].$ Set $\text{RSS}_{M_0} = \| \bold{y} \|^2 - \| \bold{z} \|^2$ for a model without constraints. }
For $m = 1, \ldots, p-1$
\[
\text{RSS}_{M_m} = \text{RSS}_{M_{m-1}} + (\bold{w}_{m}^T \bold{z})^2,
\]
where ${M_m}$ denotes a model with constraints defined by $m$ first rows of  $\bold{A}_0$. The last formula is derived in the Appendix \ref{secRSS}, see equation (\ref{rsscon}). 
\STATE {\bfseries 5. Choosing the best model according to BIC}
\STATE{Calculate 
\[
\text{BIC}_{M_m} = n \log \text{RSS}_{M_m} + (p-m)\log(n)
\]
for $m = 0, \ldots, p-1$.
Selected model $\widehat{T}$ is the model minimizing BIC among models on the nested path:
\[
\widehat{T} = \argmin_{\substack{M_m \\ 0 \leq m \leq p-1}} \text{BIC}_{M_m}.
\]}
\STATE {\bfseries Output:} $\widehat{T}$
\end{algorithmic}
\end{algorithm}

 The time complexities of successive steps of DMR algorithm are $O(np^2)$ for QR decomposition in step 1, $O(p^2)$ for hierarchical clustering in step 2, $O(p^3)$ for QR decomposition used in step 4. 
The dominating operation in the described procedure is the QR decomposition of the full model matrix. Hence, the overall time complexity of DMR algorithm is $O(np^2)$. \\
\textbf{Example 1 continued}.
For the illustrative example we have:
\[
t_{110}^2 = 9.35 \ ,\ \bold{D_1} =   \left[\begin{array}{cccc}
0& t_{121}^2 & t_{131}^2 &  t_{141}^2 \\ 
  t_{121}^2 & 0 & t_{231}^2 &  t_{241}^2 \\ 
 t_{131}^2 & t_{231}^2 & 0 &  t_{341}^2 \\ 
  t_{141}^2  & t_{241}^2 &  t_{341}^2 & 0 \\ 
\end{array}\right] = \left[\begin{array}{cccc}
 0  & 8.01 & 4.52 & 0.20 \\ 
8.01 & 0  & 0.15 & 3.09 \\ 
 4.52  &0.15 & 0  & 2.91 \\ 
 0.20  & 3.09 & 2.91 & 0  \\ 
\end{array}\right],
\]
\[
\bold{h} = [0, 0.15, 0.20, 8.01, 9.33 ]^T \ ,\ \bold{A_0} =   \begin{blockarray}{ccccc}
\beta_{00} & \beta_{10} & \beta_{21} & \beta_{31} & \beta_{41} \\
\begin{block}{[ccccc]}
0 & 0 & -1 & 1 & 0  \\
0 & 0 & 0 & 0 & 1  \\
0 & 0 & 1 & 0 & 0 \\
0 & 1 & 0 & 0 & 0  \\
\end{block}
\end{blockarray}, 
\]

\[
\bold{BIC} = [28.33, 26.65, 25.36, 34.68, 39.59]^T.
\]
Observe that the selected model $\widehat{T}$ is the true model  $T$ .
The dendrogram and cutting heights for the illustrative example obtained from clustering in step 2  are shown in Figure~\ref{rysdendr}. The horizontal dashed line corresponds to the optimal  partition chosen by BIC.


\begin{figure}[!ht]
\centerline{%
\includegraphics[width=85mm, height=85mm]{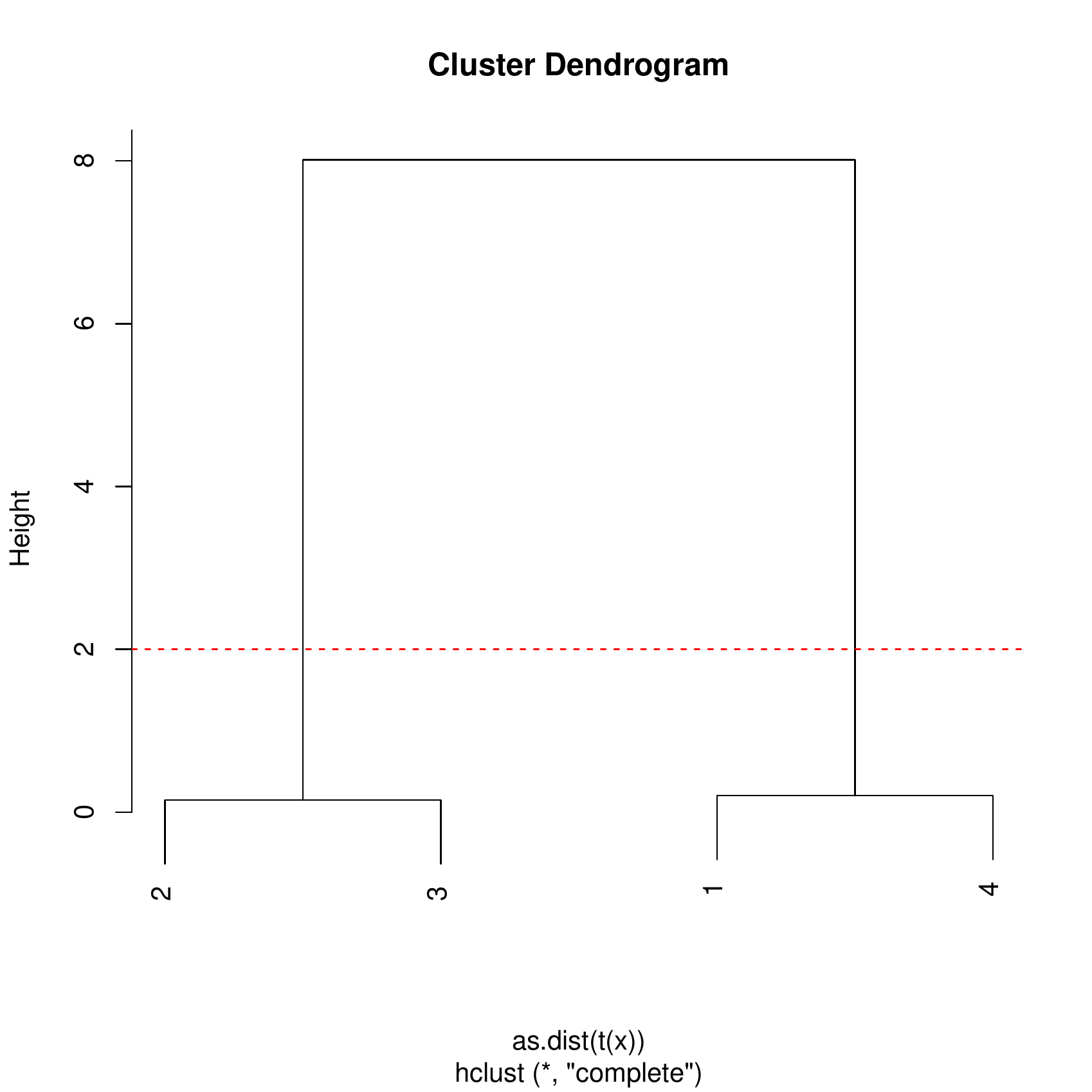}}
\caption{Dendrogram for Example 1.}\label{rysdendr}
\end{figure}

\section{Asymptotic properties of DMR algorithm}\label{secASY}
In Algorithm \ref{alg:DMR} and all the simulations and examples we assumed complete linkage in hierarchical clustering and BIC for selection in the nested family of models. The proof of consistency is more general: the linkage criterion has to be a convex combination of the minimum and maximum of the pairwise distances between clusters (see equation~\ref{eqMONO} in Appendix~\ref{appPROOF}) and generalized information criterion is used for final model selection:
\[
GIC_M= n \log RSS_M  + r_n |M|,
\]
where $r_n$ is the penalty for model size.  
Note that well known criteria AIC and BIC are special cases of GIC, if $r_n = 2$ and $r_n = \log(n)$ respectively. 

In this section we use $f_n \prec g_n$ to denote $f_n = o(g_n)$. We allow the number of predictors $p_n$ to grow monotonically with the number of observations $n$ under the condition $p_n \prec n$. 

We distinguish the following subsets of the set of all feasible models $\mathcal{M}$:
\begin{enumerate}
\item Uniquely defined model $T$, which is fixed and does not depend on sample size. We assume that the model consists of a finite number of continuous variables and a finite number of factors with finite numbers of levels. 
\item A set $\mathcal{M}_{\mathcal{V}}$ of models with one constraint imposed which is false:
\[
\mathcal{M}_{\mathcal{V}} = \{ M \subseteq F: |M| = |F| - 1 \text{ and } T \nsubseteq M \},
\]
\item A set $\mathcal{M}_\mathcal{T}$ of models with one constraint imposed which is true: 
\[
\mathcal{M}_\mathcal{T} = \{ M \subseteq F: |M| = |F| - 1 \text{ and } T \subseteq M \}.
\]
\end{enumerate}
We denote:
\begin{equation}\label{eq:DD}
\Delta = \min_{M \in \mathcal{M}_{\mathcal{V}}} \Delta_M,
\end{equation}
where $\Delta_M$ was defined in equation (\ref{eq:Delta}).
Let us notice that from equation (\ref{eq:betaM}) we get
\[
\text{Var}\left( \widehat{\boldsymbol\beta}_M \right) = \bold{A}^1_M \text{Var} \left( \widehat{\boldsymbol\xi}_M \right) \bold{A}^{1T}_M = \bold{A}^1_M  \left(\bold{A}^{1T}_M \bold{X}^T \bold{X} \bold{A}^1_M   \right)^{-1} \bold{A}^{1T}_M .
\]
Then
\[
\text{Var}\left( \sqrt{n} \left( \widehat{\boldsymbol\beta}_M - \boldsymbol\beta^* \right) \right)= n \bold{A}_M^1 \left( \bold{A}_M^{1T} \bold{X}^T \bold{X} \bold{A}_M^{1} \right)^{-1} \bold{A}_M^{1T}.
\] 
Additionally, for finite $p$, independent of $n$, if  $\frac{1}{n} \bold{X}^T \bold{X} \rightarrow \boldsymbol\Sigma > 0$ then
\[
\text{Var}\left( \sqrt{n} \left( \widehat{\boldsymbol\beta}_M - \boldsymbol\beta^* \right) \right)  \rightarrow \boldsymbol\Sigma_M = \bold{A}_M^1 \left( \bold{A}_M^{1T} \boldsymbol\Sigma \bold{A}_M^{1} \right)^{-1} \bold{A}_M^{1T}.
\] 
\begin{theorem}\label{theoremASY}

Assume that $\bold{X}$ is of full rank and $p_n \prec r_n \prec \min (n, \Delta)$. Let $\widehat{T}$ be the model selected by DMR, where linkage criterion for hierarchical clustering  is a convex combination of minimum and maximum of the pairwise distances between clusters. Then
\begin{enumerate}
\item[(a)]
$ \lim_{n \rightarrow \infty} \mathbb{P}(\widehat{T} = T) = 1$,
\item[(b)]
$\sqrt{n} \left(  \widehat{\boldsymbol\beta}_{\widehat{T}} - \boldsymbol\beta^* \right) \xrightarrow[]{\ d \ } \mathcal{N}(\bold{0}, \sigma^2 \boldsymbol\Sigma_T)$ if additionally $p$ is finite, independent of $n$ and $\frac{1}{n} \bold{X}^T \bold{X} \rightarrow \boldsymbol\Sigma > 0$.
\end{enumerate}
\end{theorem}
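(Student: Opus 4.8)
The plan is to prove part (a) first and then deduce part (b) as an essentially immediate consequence. For part (a), the strategy is the classical two-sided control of the model-selection path: show that with probability tending to one the algorithm does not stop too early (it never selects a model $M_m$ with $T \nsubseteq M_m$) and does not stop too late (it never selects a strictly larger $M_m \supsetneq T$). The key structural fact supplied by the construction is that the sequence $M_0 \supseteq M_1 \supseteq \cdots \supseteq M_{p-1}$ is nested and that, because constraints are sorted by increasing squared $t$-statistics and the linkage criterion is a convex combination of the min and max of pairwise distances (the monotonicity property referenced as equation~\eqref{eqMONO} in Appendix~\ref{appPROOF}), all true elementary constraints — the ones defining $T$ inside $F$ — are imposed before any false one. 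Consequently there is a deterministic index $m^\star$ on the path with $M_{m^\star} = T$, and it suffices to compare $\text{GIC}_{M_{m^\star}}$ against $\text{GIC}_{M_m}$ for $m < m^\star$ (underfitting) and $m > m^\star$ (overfitting). I would isolate this ordering claim as the first lemma, since it is where the clustering machinery enters; everything afterward is about $\text{RSS}$ comparisons.

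For the overfitting direction ($m > m^\star$, so $M_m \subsetneq T$ with $T \subseteq M_m$ false being impossible here — rather $M_m$ is a true model of size $|T|-(m-m^\star)$, i.e. an \emph{under}-constrained... ), let me restate carefully: for $m < m^\star$ the model $M_m$ satisfies $M_m \supsetneq T$, so by Proposition~1, $\text{RSS}_{M_m} = \boldsymbol\epsilon^T(\mathds{I}-\bold{H}_{M_m})\boldsymbol\epsilon \sim \sigma^2\chi^2_{n-|M_m|}$, and $\text{RSS}_{M_{m^\star}} - \text{RSS}_{M_m} = \sum_{j=m+1}^{m^\star}(\bold{w}_j^T\bold{z})^2$ is a chi-square with $m^\star-m \le p_n$ degrees of freedom. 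A Taylor expansion of $n\log(\cdot)$ around $\text{RSS}_{M_m}$ shows $\text{GIC}_{M_{m^\star}} - \text{GIC}_{M_m} \le n\,\frac{\text{RSS}_{M_{m^\star}}-\text{RSS}_{M_m}}{\text{RSS}_{M_m}} - r_n(m^\star-m) + o_P(\cdot)$; since $\text{RSS}_{M_m}/n \to \sigma^2$ and the chi-square numerator is $O_P(p_n)$, the penalty term $r_n(m^\star-m)$ dominates provided $p_n \prec r_n$, giving $\mathbb{P}(\text{GIC}_{M_{m^\star}} < \text{GIC}_{M_m}) \to 1$, uniformly over the at most $p_n$ such indices by a union bound together with a tail bound for chi-squares (Laurent–Massart). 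For the underfitting direction ($m < m^\star$ is the overfit case just handled; the underfit case is $m$ large, $M_m \nsupseteq T$): here $\Delta_{M_m} \ge \Delta$ by definition~\eqref{eq:DD} once one checks that any such $M_m$ refines a model in $\mathcal{M}_{\mathcal{V}}$, so $\mathbb{E}\,\text{RSS}_{M_m} \ge \sigma^2(n-|M_m|) + \Delta$ and the cross term is mean-zero Gaussian of standard deviation $O(\sqrt{\Delta})$; then $\text{GIC}_{M_{m^\star}} - \text{GIC}_{M_m} \approx -n\log(1 + \Delta/(\sigma^2 n)) + r_n(\cdots) \le -c\,\Delta + r_n p_n < 0$ with high probability because $r_n \prec \Delta$ and $p_n \prec r_n$ force $r_n p_n \prec \Delta \cdot$ ... actually one only needs $r_n \prec \Delta$ and the contribution of the size difference, which is at most $r_n p_n$; more cleanly, $r_n|M_{m^\star}| \le r_n p_n \prec \Delta$ since $p_n \prec r_n \prec \Delta$ gives $p_n r_n \prec \Delta$ only if $p_n r_n \prec \Delta$ — I will instead bound $|M_{m^\star}| = |T| = O(1)$, which is constant, so the relevant penalty difference is $O(r_n) \prec \Delta$, and we are done. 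Combining both directions and the ordering lemma yields $\mathbb{P}(\widehat{T}=T)\to 1$.

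Part (b) then follows quickly: on the event $\{\widehat{T}=T\}$, whose probability tends to one, $\widehat{\boldsymbol\beta}_{\widehat{T}} = \widehat{\boldsymbol\beta}_T = \bold{A}_T^1(\bold{Z}_{1T}^T\bold{Z}_{1T})^{-1}\bold{Z}_{1T}^T\bold{y}$, which is a fixed linear functional of the Gaussian vector $\bold{y}$; since $\boldsymbol\beta^* \in \mathcal{L}_T$ this estimator is unbiased with variance $\sigma^2\bold{A}_T^1(\bold{A}_T^{1T}\bold{X}^T\bold{X}\bold{A}_T^{1})^{-1}\bold{A}_T^{1T}$, and scaling by $\sqrt n$ together with $\frac1n\bold{X}^T\bold{X}\to\boldsymbol\Sigma>0$ gives exactly $\mathcal{N}(\bold{0},\sigma^2\boldsymbol\Sigma_T)$ using the formula for $\boldsymbol\Sigma_M$ displayed just before the theorem. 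Convergence in distribution of the mixture $\mathds{1}_{\widehat T=T}\sqrt n(\widehat{\boldsymbol\beta}_{\widehat T}-\boldsymbol\beta^*) + \mathds{1}_{\widehat T\neq T}(\cdots)$ to the same limit follows from Slutsky, since the second term has vanishing probability. The main obstacle I anticipate is the ordering lemma — verifying rigorously that the complete-linkage (more generally convex-combination) dendrogram orders \emph{all} true merges strictly below \emph{all} false merges with probability tending to one. This requires showing that true squared $t$-statistics are $O_P(1)$ (indeed $\chi^2_1$-like) while false ones grow like $\Delta \gg r_n \gg p_n$, and then propagating this separation through the recursive linkage updates using the monotonicity inequality~\eqref{eqMONO}; the dissimilarity of a merged cluster stays between the min and max of its constituents' dissimilarities, so a block of mutually-true levels is exhausted before any cross-block (false) merge can occur. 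I would also need a uniform (over $\mathcal{M}_{\mathcal V}$, whose cardinality is polynomial in $p_n$) lower bound $\Delta_M \ge \Delta$ and a matching concentration statement, which is routine but must be stated with the correct union-bound budget.
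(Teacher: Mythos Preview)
Your proposal is correct and matches the paper's approach: an ordering lemma showing $T$ lies on the nested path (via the separation of true versus false squared $t$-statistics, propagated through the convex-combination linkage---the paper's Lemmas~5 and~6), then overfit control from $p_n\prec r_n$ and underfit control from $r_n\prec\Delta$ together with $|T|$ finite, followed by a Slutsky argument on $\{\widehat T=T\}$ for part~(b). One cosmetic simplification the paper makes that you might adopt: instead of Laurent--Massart tail bounds and a union bound over the $p_n$ overfitted indices, the paper observes $\min_{T\subsetneq M}RSS_M\ge RSS_F$ and $\max_{M\in\mathcal M_{\mathcal T}}RSS_M\le RSS_T$ trivially, reducing both the overfit comparison and the $t^2$-ordering step to single-quantity estimates (Lemmas~2 and~4) and avoiding any max-of-many-chi-squares control.
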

Proof can be found in the Appendix \ref{appPROOF}.

\section{Numerical experiments}\label{secSIMUL}

All experiments were performed using functions implemented in \texttt{R} package called \texttt{DMR}, which is available at the \texttt{CRAN} repository. The main function in the package is called \texttt{DMR} and implements the DMR algorithm with an optional method of hierarchical clustering (default is complete) and a value of $r_n$ in GIC (default is $\log(n)$). The package also contains other functions that are modifications of the DMR algorithm, such as stepDMR which assumes recalculation of t-statistics after accepting every new elementary constraint and DMR4glm which can be used for model selection in generalized linear models. 

We compared 2 groups of algorithms. The first one contains 3 stepwise procedures stepBIC, ffs BIC and DMR. The second group are 2 Lasso-based methods: CAS-ANOVA and gvcm. Procedure stepBIC is implemented in the function \texttt{stepAIC} in \texttt{R} package \texttt{MASS} and  does not perform factor partitions but either deletes or keeps any of categorical predictors. A factor forward stepwise procedure (ffs BIC), implemented in \texttt{R} package \texttt{gvcm.cat} is similar to DMR  but differs in the search direction (DMR is backward and ffs BIC is forward) and in the criterion of selection of the best step (DMR uses t-statistics calculated only once and hierarchical clustering and ffs BIC recalculates criterion in every step). For DMR the complete linkage method of clustering and BIC were used. Algorithm gvcm is implemented in  \texttt{R} package \texttt{gvcm.cat} where by default there are no adaptive weights and crossvalidation is used for choosing the $\lambda$ parameter. We used adaptive weights and BIC criterion for choosing the tuning parameter since we got better results then. Implementation of CAS-ANOVA can be found on the website \url{http://www4.stat.ncsu.edu/~bondell/Software/ CasANOVA/CasANOVA.R}. Here the default BIC was used for choosing the $\lambda$ parameter making all the methods dependent on the same criterion of choosing the tuning parameters. Adaptive weights are also default in CAS-ANOVA. When using the two Lasso-based algorithms we found difficult the selection of the $\lambda$ grid. In all the experiments we tried different grids: the default ones and ours both on linear and logarithmic scales  presenting only the best results. 

We describe three simulation experiments. In Section~\ref{subsecBONDELL} results regarding an experiment constructed in the same way as in \cite{bondell} is presented.  The model consists of three factors and no continuous variables. 
As a continuation, simulations based on data containing one factor and eight correlated continuous predictors were carried out, the results can be found in Section~\ref{subsecSKOR}. 
In Section~\ref{subsecGLM} we summarize the results of an experiment regarding generalized linear models. In this experiment only 4 algorithms were compared since CAS-ANOVA applies only to normal distribution.

In Section~\ref{secMEAS} we introduce measures of performance which are generalizations of popular true positive rate and false discovery rate on categorical predictors. We call them $TPR^*$ and $FDR^*$. In comparison to generalizations introduced in \cite{tutz} and \cite{bondell}, which we call $TPR$ and $FDR$, our measures don't diminish the influence of continuous predictors and factors with a small number of levels. 
Hence, for evaluation of the model selection methods we used following criteria:  true model (TM) represents the percentage of times the procedure chose the entirely correct model. Correct factors (CF) represents the percentage of times the non-significant factors were eliminated and the true factor was kept. $1 - $TPR, FDR, $1 - $TPR$^*$ and FDR$^*$ are averaged errors made by selectors described in Section~\ref{secMEAS}. MSEP stands for mean squared error of prediction for new data and MD is mean dimension of the selected model, both with standard deviations.

The last Section~\ref{subsecCARS} refers to two real data examples where  barley yield  and prices of apartments in Munich were modeled.

\subsection{Measures of performance}\label{secMEAS}
When performing simulations, results are usually compared to the underlying truth.
 Traditionally, for model selection with only continuous predictors measures such as true positive rate (TPR) or false discovery rate (FDR) are used. In the literature (\cite{tutz}, \cite{bondell}) their generalization to both continuous and categorical predictors can be found.

Let us consider sets of elementary constraints corresponding to the true and selected models determined by sets of indexes: 
\[
\mathcal{B} = \{ (i, j, k): \ i \neq j, i,j \in N_k, k \in N \setminus \{0\}, \ (\boldsymbol\beta^*)_{ik} = (\boldsymbol\beta^*)_{jk} \} 
\]
\[
\cup \{ (j, k): j \in N_k, k \in N, (\boldsymbol\beta^*)_{jk} = 0 \}
\]
 and 
\[
\widehat{\mathcal{B}} = \{ (i, j, k): \ i \neq j, i,j \in N_k, k \in N \setminus \{0\}, \ (\widehat{\boldsymbol\beta}_{\widehat{T}})_{ik} = (\widehat{\boldsymbol\beta}_{\widehat{T}})_{jk} \} 
\]
\[
\cup \{ (j, k): j \in N_k, k \in N, (\widehat{\boldsymbol\beta}_{\widehat{T}})_{jk} = 0 \}.
\]
True positive rate is the proportion of true differences which were correctly identified to all true differences, meaning ratio of the number of true elementary constraints which were found by the selector to the number of all true elementary constraints
$
TPR_{} = | \mathcal{B}\cap \widehat{\mathcal{B}} | / |\mathcal{B}|.
$
False discovery rate is the proportion of false differences which were classified as true to all differences classified as true, meaning ratio of the number of false elementary constraints which were accepted by the selector to the number of all accepted elementary constraints
$
FDR_{} = 1- | \mathcal{B} \cap \widehat{\mathcal{B}} | / |\widehat{\mathcal{B}}|.
$

However, measures defined in this way diminish the influence of the continuous variables and factors with a small number of levels. As an example, consider a model with 5 continuous predictors and one factor with 5 levels. Then the number of parameters for continuous predictors is 5 and the number of possible elementary constraints equals 5. The number of parameters for the categorical variable is also 5, whereas the number of possible elementary constraints is $\binom{5}{2} = 10$.

We introduce a different generalization of traditional performance measures using dimensions of linear spaces which define the true and selected models. 
We consider two models: true model $T$ and selected model $\widehat{T}$. 

We define true positive rate coefficient as
$
TPR^* = |T \cap \widehat{T}| / |T|
$
and false discovery rate coefficient as
$
FDR^* = 1 - |T \cap \widehat{T}| / |\widehat{T}|,
$
where $T \cap \widehat{T}$ is defined according to equation (\ref{eqCAP}). This generalization is more fair since the influence of every parameter on the coefficients is equal.
In the article the attention is focused on values: $1-TPR^*$ and $FDR^*$, which correspond to the errors made by selector.


\subsection{Experiment 1}\label{subsecBONDELL}
The layout of this experiment is the same as in \cite{bondell}. Despite using different $\lambda$ grids, we weren't able to obtain as good results for CAS-ANOVA as in the original paper. However, the results for DMR are much better in terms of TM than those for CAS-ANOVA originally reported in  \cite{bondell}.
The experimental model consists of three factors having eight, four and three levels, respectively. The true model is $T= ( P_{1},  P_{2} ,  P_{3}  )$, where
\[
P_{1} = ( \left\{ 1,2 \right\}, \left\{3,4,5,6\right\}, \left\{ 7,8 \right\}),\ P_{2} = \left\{ 1,2,3,4 \right\},\ P_{3} = \left\{1,2,3\right\}.
\] 
The response $\bold{y}$ was generated using the true model:
\[
\bold{y} = \boldsymbol\mu + \boldsymbol\varepsilon, \ \boldsymbol\varepsilon \sim \mathcal{N} (\bold{0}, \mathbb{I}),
\]
where
\[
\begin{split}
\boldsymbol\mu = &  \bold{1}_n \beta_{00}^*  +  \bold{X}_1 \boldsymbol\beta_1^* + \bold{X}_2 \boldsymbol\beta_2^* + \bold{X}_3 \boldsymbol\beta_3^* \\
 = & \bold{1}_n \cdot 2 + \bold{X}_1 (0,-3,-3,-3,-3,-2,-2)^T +  \bold{X}_2 (0,0,0)^T + \bold{X}_3 (0,0)^T.
\end{split}
\]
 A balanced design was used with $c$ observations for each combination of factor levels, which gives $n=96 \cdot c$, $c=1,2,4$. 

The data was generated 1000 times. The best results for $\lambda_{\text{CAS-ANOVA}} = (0.1, 0.2,\ldots,3)^T$ and $\lambda_{\text{gvcm}} = (0.01,0.02,\ldots,3)^T$ together with outcomes from other methods are summarized in Table~\ref{tabbondell}.
The results of Experiment 1 indicate that DMR and ffs BIC algorithms performed almost twice better than CAS-ANOVA and gvcm in terms of choosing the true model. 
Our procedure and ffs BIC chose approximately smaller models with dimension closer to the dimension of the underlying true model, whose number of parameters is three. 
There were no significant differences between mean squared errors of prediction for all considered algorithms. 
The main conclusion, that DMR and ffs BIC procedures choose models which are smaller and closer to the proper one, is supported by the obtained values of 1 - TPR$^*$ and FDR$^*$.

\begin{table}[!ht]
\begin{center}
\caption{Results of the simulation study, Experiment 1. }\label{tabbondell}
\begin{tabular}{crcccccccc}
  \hline
n & Algorithm & TM(\%) & CF(\%) & 1-TPR$_{}$ & FDR$_{}$ & 1-TPR$^*$ & FDR$^*$ & MSEP$\pm$sd & MD$\pm$sd \\ 
  \hline
96 & DMR & 44 & 73 & 0.05 & 0.09 & 0.1 & 0.19 & 1.091$\pm$.179  & 3.4$\pm$.7 \\ 
 & ffs BIC & 42 & 73 & 0.04 & 0.09 & 0.1 & 0.2 & 1.091$\pm$.179 & 3.5$\pm$.7 \\ 
 & CAS-ANOVA & 17 & 83 & 0.04 & 0.14 & 0.06 & 0.33 & 1.104$\pm$.175 & 5.5$\pm$ 1.7\\ 
  & gvcm & 11 & 49 & 0.08 & 0.15 & 0.1 & 0.34 & 1.118$\pm$.179 & 4.5$\pm$1.6\\ 
  & stepBIC & 0 & 97 & 0 & 0.29 & 0 & 0.63 & 1.089$\pm$.171 & 8.1$\pm$.4\\ 
\hline
 192 & DMR & 66 & 82 & 0.01 & 0.05 & 0.02 & 0.1 & 1.036$\pm$.11 & 3.3$\pm$.6\\ 
 &  ffs BIC & 67 & 83 & 0.01 & 0.05 & 0.02 & 0.1 & 1.035$\pm$.11 & 3.3$\pm$.5\\ 
 & CAS-ANOVA & 33 & 93 & 0 & 0.09 & 0.01 & 0.24& 1.049$\pm$.109 & 4.9$\pm$1.3\\ 
 & gvcm & 27 & 60 & 0.01 & 0.11 & 0.02 & 0.27 & 1.049$\pm$.11 & 4.3$\pm$1.2\\ 
 & stepBIC & 0 & 99 & 0 & 0.29 & 0 & 0.63 & 1.046$\pm$.109 & 8$\pm$.2\\
\hline
 384 & DMR & 80 & 89 & 0 & 0.03 & 0 & 0.05 & 1.013$\pm$.074 & 3.2$\pm$.4 \\ 
& ffs BIC  & 79 & 89 & 0 & 0.03 & 0 & 0.05 & 1.013$\pm$.074 & 3.2$\pm$.4 \\ 
&  CAS-ANOVA   & 50 & 97 & 0 & 0.06 & 0 & 0.17 & 1.022$\pm$.074 & 4.2$\pm$1.2 \\ 
& gvcm & 49 & 77 & 0 & 0.06 & 0 & 0.16 & 1.02$\pm$.074  & 3.8$\pm$1 \\ 
&  stepBIC & 0 & 100 & 0 & 0.29 & 0 & 0.63 & 1.022$\pm$.074 & 8$\pm$.1 \\ 
   \hline
\end{tabular}
\end{center}
\end{table}

An exemplary run of DMR algorithm is shown in Figure~\ref{rysden}. The horizontal dotted line indicates the cutting height for the best model chosen by BIC.

\begin{figure}[!ht]
\centerline{%
\includegraphics[width=150mm, height=55mm]{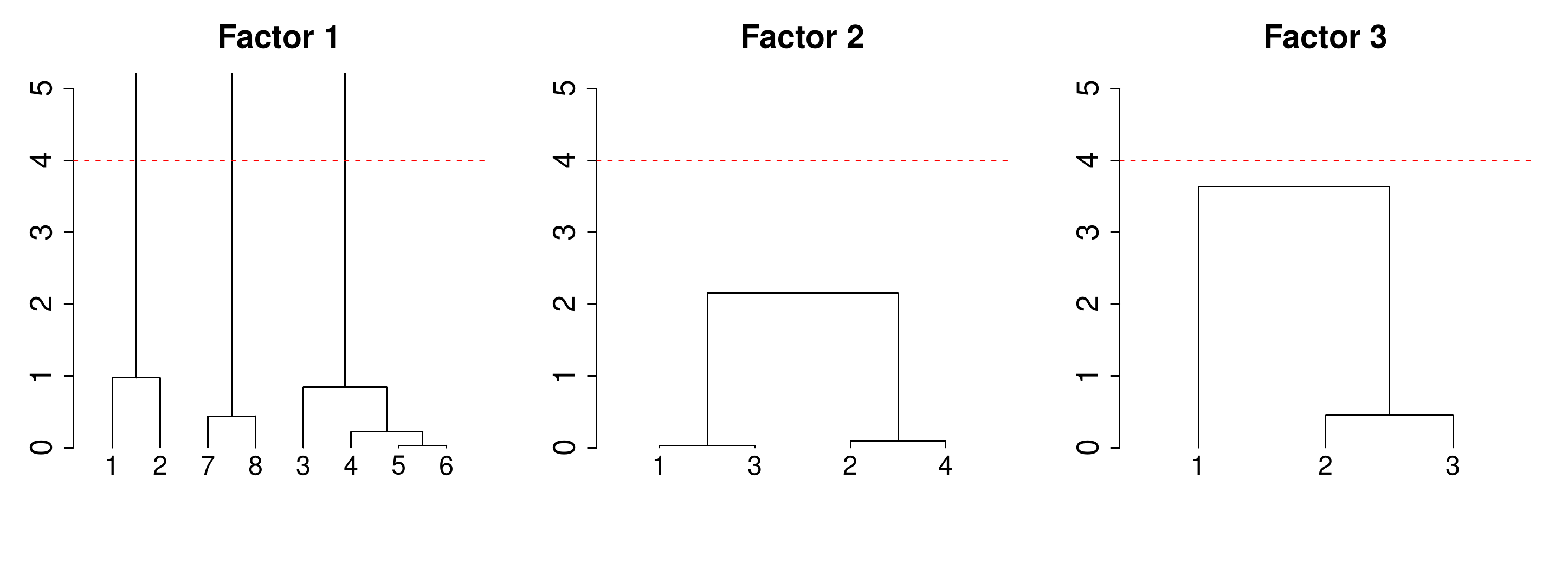}}
\caption{An examplary run of DMR algorithm for Experiment 1.}\label{rysden}
\end{figure}

In Table~\ref{tabct} the computation times of the algorithms are summarized. All values are divided by the computation time of \texttt{lm.fit} function, which fits the linear model with the use of QR decomposition of the model matrix. 

The results for CAS-ANOVA and gvcm are given for only one value of $\lambda$. By default, the searched lambda grid is of length 50 and 5001, respectively. One can see that DMR is significantly faster than ffs BIC, CAS-ANOVA  and gvcm.

\begin{table}[!ht]
\begin{center}
\caption{Computation times divided by the computation time of \texttt{lm.fit}, results obtained using \texttt{system.time} function.}\label{tabct}
\begin{tabular}{ccccccc}
  \hline
c & n & DMR & ffs BIC & CASANOVA & gvcm & stepBIC\\ 
  \hline
1 & 96 & 87 & 883& 234 & 250 & 71\\ 
  4 & 384 & 36 & 526 & 89 & 245 & 31\\ 
  20 & 1920 & 19 & 394 & 21 & 739 & 16\\ 
   \hline
\end{tabular}
\end{center}
\end{table}

\subsection{Experiment 2}\label{subsecSKOR}
In the second experiment a model containing not only categorical predictors, but also continuous variables is considered. The response $\bold{y}$ was generated from the model
with one factor with eight levels and eight continuous variables:
\[
\begin{split}
\bold{y} = & \bold{V}_0 \boldsymbol\alpha_0 + \bold{V}_1 \boldsymbol\alpha_1 + \boldsymbol\varepsilon \\
 = &\bold{V}_0 (1,0,1,0,1,0,1,0)^T + \bold{V}_1 (0,0,-2,-2,-2,-2,4,4)^T + \boldsymbol\varepsilon,
\end{split}
\] 
where $\bold{V}_0$ was generated from the multivariate normal distribution with autoregressive correlation structure with $\rho = 0.8$. The first $2 \cdot 16 \cdot c$ rows were generated using mean vector $(1,1,0,0,0,0,0,0)^T$, then $4 \cdot 16 \cdot c$ observations using mean vector $(0,0,1,1,1,1,0,0)^T$ and the last $2 \cdot 16 \cdot c$ observations using mean vector $(0,0,0,0,0,0,1,1)^T$, according to the underlying true partition of the factor. $c = 1,2,4$, hence $n = 128 \cdot c$. 
$\bold{V}_1$ is a matrix of dummy variables encoding levels of the factor and $\boldsymbol\varepsilon$ was generated from zero-mean normal distribution, $\boldsymbol\varepsilon \sim \mathcal{N}(\bold{0}, \mathbb{I})$.  The data was generated 1000 times.

The best results for $\lambda_{\text{CAS-ANOVA}} = (0.1, 0.2,\ldots,3)^T$ and $\lambda_{\text{gvcm}} = (0.01,0.02,\ldots,5)^T$ together with outcomes from other methods are summarized in Table~\ref{tabskor}.
Despite the fact that additional continuous variables were correlated, the obtained results show a considerable advantage of DMR algorithm over other methods. 


\begin{table}[!ht]
\begin{center}
\caption{Results of the simulation study, Experiment 2.}\label{tabskor}
\begin{tabular}{crccccccc}
  \hline
n & Algorithm & TM(\%) & 1-TPR & FDR & 1-TPR$^*$ & FDR$^*$ & MSEP$\pm$sd & MD$\pm$sd \\ 
  \hline
128 & DMR & 68   & 0  & 0.03 & 0  & 0.05 & 1.076$\pm$.148  & 7.4$\pm$.6  \\ 
&  ffs BIC & 60   & 0.01 & 0.04 & 0.01 & 0.06 & 1.081$\pm$.15  & 7.3$\pm$.8  \\ 
&  CAS-ANOVA  & 17   & 0  & 0.13 & 0  & 0.21 & 1.11$\pm$.153 & 9.9$\pm$1.6  \\ 
&  gvcm & 12   & 0.02 & 0.11 & 0.01 & 0.23 & 1.113$\pm$.154 & 8.2$\pm$1.5  \\ 
&  stepBIC & 0  & 0  & 0.25 & 0  & 0.42 & 1.101$\pm$.148 & 12.1$\pm$.4  \\ 
\hline
 256 & DMR & 78  & 0  & 0.02 & 0  & 0.03 & 1.033$\pm$.093   & 7.2$\pm$.5  \\ 
&  ffs BIC & 54   & 0  & 0.03 & 0  & 0.07 & 1.034$\pm$.093  & 7.4$\pm$.8  \\ 
&  CAS-ANOVA    & 27  & 0  & 0.1 & 0  & 0.16 & 1.049$\pm$ .096 & 9.2$\pm$1.4  \\ 
&  gvcm  & 24   & 0  & 0.07 & 0  & 0.17 & 1.047$\pm$.096  & 7.5$\pm$1.3  \\ 
&  stepBIC & 0  & 0  & 0.25 & 0  & 0.42 & 1.049$\pm$.095  & 12.1$\pm$.3  \\
\hline
512  & DMR & 88  & 0  & 0.01 & 0  & 0.02 & 1.015$\pm$.066  & 7.1$\pm$.4  \\ 
& ffs BIC & 85  & 0  & 0.01 & 0  & 0.02 & 1.016$\pm$.066  & 6.9$\pm$.6  \\ 
& CAS-ANOVA    & 46  & 0  & 0.06 & 0  & 0.1 & 1.024$\pm$.067  & 8.4$\pm$1.2  \\ 
&  gvcm & 35   & 0  & 0.05 & 0  & 0.12 & 1.021$\pm$.067  & 7$\pm$1.1  \\ 
&  stepBIC & 0   & 0  & 0.25 & 0  & 0.42 & 1.023$\pm$.067  & 12$\pm$.2  \\ 

   \hline
\end{tabular}
\end{center}
\end{table}

\subsection{Experiment 3}\label{subsecGLM}
Simultaneous deleting continuous variables and merging levels of factors can also be considered in the framework of generalized linear models. The problem has already been discussed in \cite{tutzGLM}, where $L_1$ regularization was used. 
After replacing squared t-statistics with squared Wald's statistics, DMR algorithm can be easily 
modified 
to generalized linear models. Simulation results for DMR algorithm for logistic regression are presented below. Let us consider a logistic regression model whose linear part consists of three factors defined as in Experiment 1. The response $\bold{y}$ was independently sampled from binomial distribution:
\[
y_i \sim B \left(1, \frac{\exp(\mu_i)}{1 + \exp(\mu_i)} \right), \ i = 1,\ldots,n,
\] 
where $\mu_i$ are elements of $\boldsymbol\mu$ defined as in Experiment 1,  $\boldsymbol\mu = (\mu_1, \ldots, \mu_n)^T$ and $n = 96 \cdot c$ for $c = 1,2,4,8$.

The results  of the experiment are summarized in Table~\ref{tabglm}. The best outcomes for gvcm, presented in the table, were obtained for $\lambda$ grids $\lambda_{\text{gvcm}} = (0.01, 0.02, \ldots, 5)^T$. Again, DMR and ffs BIC show considerable advantage over other model selection methods.


\begin{table}[!ht]
\begin{center}
\caption{Results of the simulation study for logistic regression, Experiment 3.}\label{tabglm}
\begin{tabular}{crcccccccc}
  \hline
n & Algorithm & TM & CF & 1-TPR & FDR & 1-TPR$^*$ & FDR$^*$ & MSEP$\pm$sd & MD$\pm$sd \\ 
  \hline
96 & DMR & 6 & 62 & 0.21 & 0.15 & 0.38 & 0.35 & 0.304$\pm$.049 & 3.1$\pm$1.2 \\ 
 &ffs BIC & 7 & 72 & 0.21 & 0.14 & 0.37 & 0.35 & 0.302$\pm$.049  & 3.1$\pm$.8 \\ 
 & gvcm & 0 & 21 & 0.18 & 0.32 & 0.27 & 0.61 & 0.317$\pm$.062 & 6.4$\pm$2.9 \\ 
 & stepBIC & 0 & 96 & 0.00 & 0.29 & 0.00 & 0.63 & 0.299$\pm$.049  & 8$\pm$.6 \\ 
\hline
 192 &DMR & 25 & 81 & 0.16 & 0.09 & 0.25 & 0.23 & 0.296$\pm$.036  & 3$\pm$.7 \\ 
&  ffs BIC & 21 & 82 & 0.17 & 0.10 & 0.28 & 0.26 & 0.293$\pm$.034 & 3$\pm$.7 \\ 
&  gvcm & 1 & 26 & 0.15 & 0.26 & 0.19 & 0.52 & 0.296$\pm$.038 & 5.8$\pm$2.6 \\ 
&  stepBIC & 0 & 99 & 0.00 & 0.29 & 0.00 & 0.63 & 0.291$\pm$.034  & 8$\pm$.2 \\ 
\hline
 384 &DMR & 55 & 88 & 0.06 & 0.06 & 0.12 & 0.14 & 0.29$\pm$.023 & 3.1$\pm$.5 \\ 
&   ffs BIC & 51 & 88 & 0.06 & 0.06 & 0.12 & 0.16 & 0.29$\pm$.023 & 3.2$\pm$.5 \\ 
& gvcm & 6 & 37 & 0.08 & 0.20 & 0.10 & 0.43 & 0.289$\pm$.022 & 5.5$\pm$2.5 \\ 
&  stepBIC & 0 & 100 & 0.00 & 0.29 & 0.00 & 0.63 & 0.289$\pm$.022 & 8$\pm$.2 \\ 
\hline
 768 &DMR & 79 & 92 & 0.01 & 0.03 & 0.03 & 0.07 & 0.29$\pm$.016  & 3.1$\pm$.4 \\ 
&  ffs BIC & 79 & 92 & 0.01 & 0.03 & 0.03 & 0.06 & 0.29$\pm$.016  & 3.1$\pm$.4 \\ 
&  gvcm & 20 & 48 & 0.01 & 0.16 & 0.02 & 0.36 & 0.289$\pm$.016  & 5.2$\pm$2.2 \\ 
&  stepBIC & 0 & 100 & 0.00 & 0.29 & 0.00 & 0.63 & 0.29$\pm$.016  & 8$\pm$.1 \\ 

   \hline
\end{tabular}
\end{center}
\end{table}

In Table~\ref{tabctglm} the computation times of the algorithms are summarized. All values are divided by the computation time of \texttt{glm.fit} function.
The results for gvcm are given for only one value of $\lambda$, while by default the searched lambda grid is of length 5001. DMR is again significantly faster than ffs BIC and gvcm.

\begin{table}[!ht]
\begin{center}
\caption{Computation times divided by the computation time of \texttt{glm.fit}, results obtained using \texttt{system.time} function.}\label{tabctglm}
\begin{tabular}{cccccc}
\hline
c & n & DMR & ffs BIC &  gvcm & stepBIC\\ 
\hline
1 & 96 & 103 & 399 & 101& 40 \\ 
4 & 384 & 68 & 398 & 74 & 28 \\ 
20 & 1920 & 49 & 377 & 101 & 23 \\ 
\hline
\end{tabular}
\end{center}
\end{table}

\subsection{Real data examples}\label{subsecCARS}
\paragraph{Example 1: Barley.} The data set  \texttt{barley} from \texttt{R} library \texttt{lattice} has already been discussed in the literature, for example in \cite{bondell}. The response is the barley yield for each of 5 varieties (Svansota,  Manchuria, Velvet,    Peatland  and Trebi) at 6 experimental farms in Minnesota for each year of the years 1931 and 1932 giving a total of 60 observations. The characteristics of the chosen models using different algorithms are presented in Table \ref{Tab}.  The results for the full model which is least squares estimator with all variables were given as a benchmark.
For the two Lasso-based algorithms we find difficult the selection of the $\lambda$ grid. Therefore, the results for CAS-ANOVA are given for two different grids: the first one chosen so that the chosen model was the same as the one described in \cite{bondell}, $\bold{\lambda}_1 = (25, 25.01, 25.02, \ldots, 35)^T$, and the second wider superset of the first one, $\bold{\lambda}_1 = (0.1, 0.2, 0.3,\ldots, 35)^T$. We used $\bold{\lambda}_2$ grid also for gvcm.

The results show that stepwise methods give smaller models with smaller BIC values than the Lasso-based methods. The additional advantage of DMR and ffs BIC is lack of a troublesome tuning parameter.

\begin{table}[H]
\centering
\caption{Characteristics of the chosen models for Barley data set. \label{Tab}}
\begin{tabular}{rcccc}
  \hline
algorithm & model dim & $R^2$ & adj. $R^2$  & BIC \\ 
  \hline
full model & 11 & .68 & .61 & 416 \\ 
  stepBIC & 11 & .68 & .61 & 416 \\ 
  CAS-ANOVA $\bold{\lambda}_2$ & 9 & .66 & .61 & 411 \\  
gvcm $\bold{\lambda}_2$ & 7 & .66 & .6 & 403 \\ 
 CAS-ANOVA $\bold{\lambda}_1$ & 6 & .61 & .58 & 407 \\ 
  ffs BIC & 5 & .64 & .61 & 399 \\ 
  DMR & 5 & .64 & .61 & 399 \\ 
   \hline
\end{tabular}
\end{table}
\paragraph{Example 2: Miete.} The data set \texttt{miete03} comes from \url{http://www.statistik.lmu.de/service/datenarchiv}. The data consists of 2053 households interviewed for the Munich rent standard 2003. The response is monthly rent per square meter in Euros.  8 categorical and 3 continuous variables  give 36 and 4 (including the intercept) parameters. The data is described in detail in \cite{tutz}. 

Model selection was performed using five methods: DMR, ffs BIC, CAS-ANOVA, gvcm and stepBIC. Characteristics of the chosen models are shown in Table~\ref{tabcars} with results for the full model added for comparison. 



\begin{table}[!ht]
\begin{center}
\caption {Characteristics of the chosen models for Miete data set.}\label{tabcars}
\begin{tabular}{lcccc}
 \hline
 Selection &  Model & $R^2$ & adj.$R^2$   & BIC   \\
method & dimension &  & & \\
\hline
  Full model & 40 & .94 & .94 & 23037\\
 CAS-ANOVA & 31 & .94 & .94 & 22972 \\
 gvcm & 26 & .94 & .94 & 22933 \\
 DMR & 12 & .94 & .94 & 22833 \\
 stepBIC & 11 & .94 & .94 & 22847 \\
\hline
\end{tabular}
\end{center}
\end{table}

The reason of lack of results for ffs BIC in the part of Table~\ref{tabcars} is that the algorithm required to allocate too much memory (factor urban district has 25 levels).

We can conclude that DMR procedure and ffs BIC chose much better models than other compared methods in terms of BIC. However, DMR method can be applied to problems with larger number of parameters.

\section{Discussion}
We propose the DMR method which combines deleting continuous variables and merging levels of factors in linear models. DMR 
relies on ordering of elementary constraints using squared t-statistics and choosing the best model according to BIC in 
the nested family of models. A slightly modified version of the DMR algorithm can be applied to generalized linear models.

We proved that DMR is a consistent model selection method. The main advantage of our theorem over the analogous one for 
the Lasso based methods (CAS-ANOVA, gvcm) is that we allow that the number of predictors grows to infinity.

We show in simulations that DMR and ffs BIC are more accurate than the Lasso-based methods. However, DMR is much faster 
and less memory demanding
in comparison to ffs BIC. Our results  are not exceptional in comparison to others in the literature. In Example 1 in 
\cite{ZouLi} a similar
simulation setup to our Experiment 1, $n=96$, has been considered. The adaptive Lasso method (denoted there as one-step LOG) 
was outperformed by exhaustive BIC with 66 to 73 percent of true model selection accuracy. We repeated the 
simulations and got similar results with 76 percent for the Zheng-Loh algorithm (described in \cite{loh}), which is DMR with just 
continuous variables. Thus, in the Zou and Li experiment the advantage of the Zheng-Loh algorithm over the adaptive Lasso 
is not as large as in our work, but Zou and Li used a better local linear approximations (LLA) of the penalty function in 
the adaptive Lasso
implementation. Recall that both CAS-ANOVA and gvcm employ the local quadratic approximation (LQA) of the penalty 
function.

The superiority of DMR over the Lasso based methods in our experiments not only comes from weakness of LQA used in the 
adaptive Lasso implementation.
Greedy subset selection methods similar to the Zheng-Loh algorithm have been proposed many times.
Recently, in \cite{PaM} a combination of screening of predictors by the Lasso with the Zheng-Loh greedy 
selection for high-dimensional linear models has been proposed. The authors  showed both theoretically and experimentally that such combination is 
competitive to the Multi-stage Convex Relaxation described in \cite{zhang2010}, which is least squares with capped $l_1$ 
penalty implemented via LLA.

\


\appendix

\section{Regular form of constraint matrix}\label{sec:regular}
We say that $\bold{A}_{0M}$ is in regular form if it can be complemented to  $\bold{A}_M$ so that:
\begin{equation}\label{eqCONSTR}
\bold{A}_M = \left[\begin{array}{c}
\bold{A}_{1M} \\ \hline
\bold{A}_{0M} \\
\end{array}\right] = 
\left[\begin{array}{cc}
\mathds{I} &0 \\ \hline
\bold{B}_M &\mathds{I}
\end{array}\right],
\end{equation}
where $\bold{B}_M$ is a matrix consisting of $0, -1,1$. Then, using Schur complement we get:
\begin{equation}\label{eqA1}
\bold{A}_M^{-1} = \left[\begin{array}{ccc}
\mathds{I} & \vline & 0 \\
- \bold{B}_M & \vline & \mathds{I} \\
\end{array}\right] = \left[\begin{array}{ccc}
\bold{A}_M^1 & \vline & \bold{A}_M^0 \\
\end{array}\right].
\end{equation}
Constraint matrix in regular form can always be obtained by a proper permutation of model's parameters.
 Let us denote clusters in each partition: $P_{Mk} = \left( C_{Mik} \right)_{i=1}^{i_k}$, where $i_k$ is the number of clusters, $k \in N \setminus \{0 \}$  and minimal elements in each cluster as $j_{Mik} = \min \{j \in C_{Mik}\}$. Let $P_{M0}$ denote the set of continuous variables in the model.
 Sort model's parameters in the following order:
\begin{enumerate}
\item $\beta_{00}$,
\item $\beta_{j0}$: $j \in P_{M0} \setminus \{ 0\}$,
\item $\beta_{j_{Mik}k}$ for  $i = 1,\ldots,i_k$, $i \neq 1$, $k \in N \setminus \{0 \}$,
\item $\beta_{j0}$: $j \in N_0 \setminus P_{M0}$,
\item $\beta_{jk}$, $j \in C_{Mik}\setminus  \{j_{Mik}\}$, $k \in N \setminus \{0 \}$.
\end{enumerate}
Sort columns of model matrix $\bold{X}$ in the same way as vector $\boldsymbol\beta$.

\begin{example}
\emph{ As an illustrative example consider a full model $F = ( P_{F0},  P_{F1},  P_{F2})$, where
$$
P_{F0} = \{1,2\},\  P_{F1} = \left( \{1 \},\{2 \},\{3 \}, \{4 \},\{5 \}, \{6 \},\{7 \},\{8 \}\right),\  P_{F2} = \left(\{1 \},\{2 \}, \{3 \}\right)
$$
and $p_0 = 2, p_1 = 8, p_2 =3, p=12$. We denote a feasible model with 7 elementary constraints:
$\beta_{10}=0, \ \beta_{21}=0,\ \beta_{71}=0,\ \beta_{31}=\beta_{51},\ \beta_{41}=\beta_{61},\ \beta_{41}=\beta_{81}, \ \beta_{22}=0$
as  $M=  ( P_{M0},  P_{M1},  P_{M2})$, where : 
$$
P_{M0} = \{2\}, \ P_{M1} = \left( \left\{1,2,7 \right\}, \left\{3,5\right\}, \left\{4,6,8\right\} \right), \ P_{M2} = \left( \left\{1,2\right\}, \left\{3 \right\} \right).
$$ 
Constraint matrix in regular form for model $M$, where each row corresponds to one of the 7 elementary constraints,  is:
\[
\bold{A}_{0M} = \begin{array}{cccccccccccc}
\beta_{00}& \beta_{20} & \beta_{31} & \beta_{41} & \beta_{32} & \beta_{10} & \beta_{21} & \beta_{71} & \beta_{51} & \beta_{61} & \beta_{81} & \beta_{22} \\ 
\left[\begin{array}{c}  0 \\ 0 \\ 0 \\ 0 \\ 0 \\ 0 \\ 0 \\ \end{array}\right.
&
\begin{array}{c} 0 \\ 0 \\ 0 \\ 0 \\ 0 \\ 0 \\ 0 \\ \end{array} 
&
\begin{array}{c}0 \\ 0 \\ 0 \\ -1 \\ 0 \\ 0 \\ 0 \\ \end{array} 
&
\begin{array}{c} 0 \\ 0 \\ 0 \\ 0 \\ -1 \\ -1 \\ 0 \\ \end{array} 
&
\begin{array}{c} 0 \\ 0 \\ 0 \\ 0 \\ 0 \\ 0 \\ 0 \\ \end{array}
&
\begin{array}{c} 1 \\ 0 \\ 0 \\ 0 \\ 0 \\ 0 \\ 0 \\ \end{array}
&
\begin{array}{c} 0 \\ 1 \\ 0 \\ 0 \\ 0 \\ 0 \\ 0 \\ \end{array}
&
\begin{array}{c} 0 \\ 0 \\ 1 \\ 0 \\ 0 \\ 0 \\ 0 \\ \end{array}
&
\begin{array}{c} 0 \\ 0 \\ 0 \\ 1 \\ 0 \\ 0 \\ 0 \\ \end{array}
&
\begin{array}{c} 0 \\ 0 \\ 0 \\ 0 \\ 1 \\ 0 \\ 0 \\ \end{array}
&
\begin{array}{c} 0 \\ 0 \\ 0 \\ 0 \\ 0 \\ 1 \\ 0 \\ \end{array}
&
\left.\begin{array}{c} 0 \\ 0 \\ 0 \\ 0 \\ 0 \\ 0 \\ 1 \\ \end{array}\right]
\end{array}.
\]
and after inverting matrix~$\bold{A}_M^{-1}$ is obtained
\[
\bold{A}_M^{-1} = \left[\begin{array}{c|c}
\bold{A}_M^1  & \bold{A}_M^0 \\
\end{array}\right]
=
\begin{blockarray}{ccccccccccccc}
\beta_{00}& \beta_{20} & \beta_{31} & \beta_{41} & \beta_{32} & \beta_{10} & \beta_{21} & \beta_{71} & \beta_{51} & \beta_{61} & \beta_{81} & \beta_{22} \\ 
\begin{block}{[ccccc|cccccccc]}
1 & 0 & 0 & 0 & 0 & 0 & 0 & 0 & 0 & 0 & 0 & 0 \\
0 & 1 & 0 & 0 & 0 & 0 & 0 & 0 & 0 & 0 & 0 & 0 \\
0 & 0 & 1 & 0 & 0 & 0 & 0 & 0 & 0 & 0 & 0 & 0 \\
0 & 0 & 0 & 1 & 0 & 0 & 0 & 0 & 0 & 0 & 0 & 0 \\
0 & 0 & 0 & 0 & 1 & 0 & 0 & 0 & 0 & 0 & 0 & 0 \\
0 & 0 & 0 & 0 & 0 & 1 & 0 & 0 & 0 & 0 & 0 & 0 \\
0 & 0 & 0 & 0 & 0 & 0 & 1 & 0 & 0 & 0 & 0 & 0 \\
0 & 0 & 0 & 0 & 0 & 0 & 0 & 1 & 0 & 0 & 0 & 0 \\
0 & 0 & 1 & 0 & 0 & 0 & 0 & 0 & 1 & 0 & 0 & 0 \\
0 & 0 & 0 & 1 & 0 & 0 & 0 & 0 & 0 & 1 & 0 & 0 \\
0 & 0 & 0 & 1 & 0 & 0 & 0 & 0 & 0 & 0 & 1 & 0 \\
0 & 0 & 0 & 0 & 0 & 0 & 0 & 0 & 0 & 0 & 0 & 1 \\
\end{block}
\end{blockarray}
\]
}
\end{example}

Notice that for regular constraint matrix $\bold{Z}_M$ is the full model matrix $\bold{X}$ with appropriate columns deleted or added to each other.

\section{Detailed description of step 3 of the  DMR algorithm}\label{detDMR}
Since step 3 of DMR algorithm needs complicated notations concerning hierarchical clustering, we decided to present them in the Appendix for the interested reader. In particular, we show here how the cutting heights vector $\bold{h}$ and matrix of constraints $\bold{A_0}$ are built.

Let us define vectors $\bold{a}(1,j,k)$ and $\bold{a}(i, j, k)$ (corresponding to the elementary constraints, being building blocks for $\bold{A_0}$) such that :
\begin{equation}\label{eqa1jk}
\bold{a}(1,j,k) = [a_{st}(j,k)]_{\begin{tiny} \begin{matrix}  s \in N_t \\ t \in N \end{matrix} \end{tiny}}, \ a_{st}(j,k) = \mathds{1}(s=j,t=k),
\end{equation}
\begin{equation}\label{eqaijk}
\bold{a}(i,j,k) = [a_{st}(i,j,k)]_{\begin{tiny} \begin{matrix}  s \in N_t \\ t \in N \end{matrix} \end{tiny}}, \ a_{st}(i,j,k) = \mathds{1}(s=i,t=k) - \mathds{1}(s=j,t=k).
\end{equation}

For each step $s$ of the hierarchical clustering algorithm we use the following notation for the partitions of set $\{1\} \cup N_k = \{1,2,\ldots, p_k \}$:
\[
P_{sk} = \{ C_{isk} \}_{i=1}^{p_k - s +1}, \ s = 1,\ldots, p_k.
\]
We assume complete linkage clustering:
\[
\begin{split}
d& \left( C_{i_{s+1},s+1,k} = C_{i_ssk} \cup C_{j_ssk} , C_{j_{s+1},s+1,k} = C_{o_ssk}  \right) \\
&=\max \left\{ d \left( C_{i_ssk}, C_{o_ssk}  \right), d \left( C_{j_ssk}, C_{o_ssk}  \right)  \right\}.
\end{split}
\]
Cutting heights in steps $s = 1,\ldots, p_k - 1$ are defined as:
\[
h_{sk} = \min_{i \neq j} d \left( C_{isk}, C_{jsk} \right).
\]
 Let us denote vector $\tilde{\bold{a}}_{sk}$ as an elementary constraint corresponding to cutting height $h_{sk}$, where:
 \[
\tilde{\bold{a}}_{sk} = \bold{a}(i_*, j_*, k), \ i_* = \min_{ i \in C_{i_1sk}}  i, \  j_* = \min_{j \in C_{j_1sk}} j \text{ and } (i_1, j_1) = \argmin_{i \neq j} d \left( C_{isk}, C_{jsk} \right).
\] 

Step 3 of the algorithm can be now rewritten:

 Combine vectors of cutting heights: $\bold{h} = [0, \bold{h}_0^T, \bold{h}_1^T, \ldots, \bold{h}_l^T]^T$, where $\bold{h}_0$ is vector of cutting heights for constraints concerning continuous variables and $0$ corresponds to model without constraints:
\[
 \bold{h}_k = [h_{sk}]_{s = 1}^{p_k-1}, \ k \in N \setminus \{ 0 \} \text{ and }\bold{h}_{0} = [0, t_{110}^2, t_{120}^2, \ldots, t_{1p_00}^2]^T.
\]
Sort elements of $\bold{h}$ in increasing order getting $\bold{h}_{:} = [h_{m:p}]_{m = 1}^p$ and construct $(p-1) \times p$ matrix of constraints 
\[
\bold{A}_0 = [\tilde{\bold{a}}_{2:p}, \tilde{\bold{a}}_{3:p}, \ldots, \tilde{\bold{a}}_{p:p}]^T,
\]
where $\tilde{\bold{a}}_{m:p}$ is the elementary constraint corresponding to cutting height $h_{m:p}$. Then proceed as described in Algorithm \ref{alg:DMR}.

\section{Recursive formula for RSS in a nested family of linear models}\label{secRSS}
In this section we show some implementation facts concerning the DMR algorithm. In particular an effective way  of calculation of residual sums of squares for nested models using QR decompositions is discussed.

Let us consider a linear model
with linear constraints:
\begin{equation}\label{eq:lincon}
\mathcal{L} = \left\{ \boldsymbol\beta \in \mathds{R}^p, \bold{A}_{0} \boldsymbol\beta = \bold{0} \right\},
\end{equation}
where $A_{0}$ is $(p-q) \times p$ constraint matrix.
The objective is to calculate residual sum of squares $RSS = \| \bold{y} - \bold{X}\widehat{\boldsymbol\beta} \|^2$.
 QR decomposition of the model matrix is performed
\[
\bold{X} = \bold{Q} \bold{R},
\]
where $\bold{Q}$ is $n \times p$ orthogonal matrix and $\bold{R}$ is $p \times p$ upper triangular matrix.
Let us denote $\bold{S} = \bold{R}^{-T} \bold{A}_{0}^T$, then 
\[
\bold{Q}^T \bold{y} = \bold{R} \boldsymbol\beta^* + \bold{Q}^T \boldsymbol\varepsilon \text{ and } \bold{S}^T \bold{R} \boldsymbol\beta^* = \bold{0}.
\]
After substitution 
$\bold{z} = \bold{Q}^T\bold{y}$, $\boldsymbol\gamma^* = \bold{R}\boldsymbol\beta^*$, $\boldsymbol\eta = \bold{Q}^T \boldsymbol\varepsilon$ 
we get
\begin{equation}\label{eq1}
\bold{z} = \boldsymbol\gamma^* + \boldsymbol\eta \text{ and } \bold{U}^T \bold{W}^T \boldsymbol\gamma^* = \bold{0},
\end{equation}
where $\bold{W}$ and $\bold{U}$ are respectively $p \times (p-q)$ orthogonal matrix and $(p-q) \times (p-q)$ upper triangular matrix from the QR decomposition of matrix $\bold{S}$. We have
\[
\bold{W}^T \boldsymbol\gamma^* =  \bold{U} \bold{U}^T  \bold{W}^T \boldsymbol\gamma^* = \bold{0}.
\] 
Let us denote $\bold{\overbar{W}}$ as orthogonal complement of $ \bold{W}$ to matrix with dimensions $p \times p$. 
We multiply equation~(\ref{eq1}) by $[\bold{\overbar{W}}, \bold{W}]$:
\[
[\bold{\overbar{W}}, \bold{W}]^T \bold{z} =[\bold{\overbar{W}}, \bold{W}]^T \boldsymbol\gamma^* + [\bold{\overbar{W}}, \bold{W}]^T \boldsymbol\eta \text{ and } \bold{W}^T \boldsymbol\gamma^* = 0.
\]
Therefore the OLS estimator $\widehat{\boldsymbol\gamma}$ of $\boldsymbol\gamma^*$ with constraints satisfies the following equation
\begin{equation}\label{eq2}
\left[ \begin{array}{c}
\bold{\overbar{W}}^T \bold{z} \\
0
\end{array} \right] 
= [\bold{\overbar{W}}, \bold{W}]^T \widehat{\boldsymbol\gamma}.
\end{equation}
Multiplying (\ref{eq2}) by $[\bold{\overbar{W}}, \bold{W}]$, we obtain
$
\bold{\overbar{W}}\bold{\overbar{W}}^T \bold{z}  = \widehat{\boldsymbol\gamma},
$
then
\[
(\mathbb{I} - \bold{W}  \bold{W}^T) \bold{z} = \widehat{\boldsymbol\gamma} = \bold{R} \widehat{\boldsymbol\beta}.
\]
Let $\overline{\bold{Q}}$ be an orthogonal complement of $\bold{Q}$ to matrix with dimensions $n \times n$.
The residual sum of squares for the model with linear constraints (\ref{eq:lincon}) can now be
written as
\begin{equation}\label{eq:rssqr}
\begin{split}
RSS_M = & \| \overline{\bold{Q}}^T y \|^2 + \| \bold{Q}^T(\bold{y} - \bold{X} \widehat{\boldsymbol\beta}_{M}) \|^2 = \| \bold{y}\|^2 - \| \bold{z}\|^2 + \| \bold{Q}^T \bold{y} - \bold{R} \widehat{\boldsymbol\beta}_{M}\|^2 \\
= & \|\bold{y}\|^2 - \|\bold{z}\|^2 + \| \bold{W} \bold{W}^T \bold{z}\|^2 = \|\bold{y}\|^2 - \|\bold{z}\|^2 + \| \bold{W}^T \bold{z} \|^2 \\
= & \|\bold{y}\|^2 - \|\bold{z}\|^2+ \sum_{m = 1}^{p-q} (\bold{w}_{m}^T \bold{z})^2,
\end{split}
\end{equation}
where $\bold{w}_{m}$ is the $m$-th column of $\bold{W}$.

 Denote by $(\bold{A}_0)_{m,p}, \bold{S}_{m,p}, \bold{W}_{m,p}$ and $\bold{U}_{m,p}$ submatrices of $\bold{A}_0, \bold{S}, \bold{W}$ and $\bold{U}$ respectively, obtained by retaining first $m$ rows and $p$ columns.
Let us  consider a nested family of  feasible models $M_m$, $m = 0, \ldots, p-q$ defined as 
$$
\mathcal{L}_{M_m} = \left\{ \boldsymbol\beta \in \mathds{R}^p, (\bold{A}_{0})_{m,p} \boldsymbol\beta = \bold{0} \right\}.
$$  
For $m = 0, \ldots,p-q$ we have  
$$
\bold{S}_{p,m} = \bold{W}_{p, m} \bold{U}_{m,m},
$$
 because matrix $\bold{U}_{m,m}$ is upper triangular. Since 
$
\bold{W}_{p,m}^T\bold{W}_{p,m} = \mathbb{I},
$
then $\bold{W}_{p, m} \bold{U}_{m,m}$ is QR decomposition of $\bold{S}_{p,m}$. Then from equation (\ref{eq:rssqr})
we get a recursive formula for residual sum of squares for nested models:
\begin{equation}\label{rsscon}
\begin{split}
RSS_{M_0}&=   \|\bold{y}\|^2 - \|\bold{z}\|^2, \\
RSS_{M_m} &= RSS_{M_{m-1}}+  (\bold{w}_{m}^T \bold{z} )^2 \text{ for} \ m = 1, \ldots, p-1.
\end{split}
\end{equation}
\section{Proof of Theorem 1.}\label{appPROOF} 
\subsection{Properties of orthogonal projection matrices}
For a feasible model $M$ let us define a following orthogonal projection matrix:
\[
\overline{\bold{H}}_M = \bold{\bold{X}}  (\bold{X}^T \bold{X})^{-1} \bold{A}_{0M}^T \left(\bold{A}_{0M} (\bold{X}^T\bold{X})^{-1}\bold{A}_{0M}^T \right)^{-1} \bold{A}_{0M}   (\bold{X}^T\bold{X})^{-1}\bold{X}^T.
\]
\begin{lemma}\label{lem:hhh} We have
\[
\overline{\bold{H}}_M = \bold{H}_F - \bold{H}_M.
\]
\end{lemma}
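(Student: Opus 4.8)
The plan is to identify $\overline{\bold{H}}_M$ as the orthogonal projection onto the orthogonal complement of $\mathrm{col}(\bold{Z}_{1M})$ inside $\mathrm{col}(\bold{X})$, and to show that $\bold{H}_F - \bold{H}_M$ is the projection onto exactly the same subspace. First I would record that the full model $F$ carries no constraints, so $\bold{A}_F^1$ is a square invertible $p\times p$ matrix, $\bold{Z}_{1F} = \bold{X}\bold{A}_F^1$ spans $\mathrm{col}(\bold{X})$, and $\bold{H}_F = \bold{X}(\bold{X}^T\bold{X})^{-1}\bold{X}^T$. Writing $\bold{G} = (\bold{X}^T\bold{X})^{-1}$ and $\bold{C} = \bold{A}_{0M}$, so that $\overline{\bold{H}}_M = \bold{X}\bold{G}\bold{C}^T(\bold{C}\bold{G}\bold{C}^T)^{-1}\bold{C}\bold{G}\bold{X}^T$, I would then check directly, using $\bold{G}\bold{X}^T\bold{X}\bold{G} = \bold{G}$ and the symmetry of $\bold{G}$ and of $\bold{C}\bold{G}\bold{C}^T$, that $\overline{\bold{H}}_M^2 = \overline{\bold{H}}_M = \overline{\bold{H}}_M^T$, so $\overline{\bold{H}}_M$ is an orthogonal projection (its existence already uses that $\bold{C}\bold{G}\bold{C}^T$ is invertible, which holds since $\bold{A}_{0M}$ has full row rank and $\bold{G}$ is positive definite).

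The key identity I would exploit is $\bold{A}_{0M}\bold{A}_M^1 = \bold{0}$, which is nothing but the lower-left block of $\bold{A}_M\bold{A}_M^{-1} = \mathds{I}$ read off from the block forms $\bold{A}_M = [\bold{A}_{1M}^T \mid \bold{A}_{0M}^T]^T$ and $\bold{A}_M^{-1} = [\bold{A}_M^1 \mid \bold{A}_M^0]$. Combined with $\bold{Z}_{1M} = \bold{X}\bold{A}_M^1$ and the normal-equation cancellation $\bold{X}^T\bold{X}\bold{G} = \mathds{I}$, this gives $\bold{Z}_{1M}^T\bold{X}\bold{G}\bold{C}^T = \bold{A}^{1T}_M\bold{C}^T = (\bold{A}_{0M}\bold{A}_M^1)^T = \bold{0}$, hence $\bold{Z}_{1M}^T\overline{\bold{H}}_M = \bold{0}$ and therefore $\bold{H}_M\overline{\bold{H}}_M = \overline{\bold{H}}_M\bold{H}_M = \bold{0}$: the two projections have orthogonal ranges. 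In the same way $\bold{H}_F\bold{X} = \bold{X}$ yields $\bold{H}_F\overline{\bold{H}}_M = \overline{\bold{H}}_M$ and $\bold{H}_F\bold{H}_M = \bold{H}_M$, so the ranges of both $\bold{H}_M$ and $\overline{\bold{H}}_M$ lie inside $\mathrm{col}(\bold{X})$.

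It then follows that $\bold{H}_M + \overline{\bold{H}}_M$ is an orthogonal projection onto $\mathrm{range}(\bold{H}_M)\oplus\mathrm{range}(\overline{\bold{H}}_M)\subseteq\mathrm{col}(\bold{X})$, and I would finish with a rank count: $\bold{X}$ has full rank $p$, so $\mathrm{rank}(\bold{H}_F) = p$; $\bold{Z}_{1M}$ has full column rank $q$, so $\mathrm{rank}(\bold{H}_M) = q$; and $\bold{A}_{0M}$ has full row rank $p-q$ as a block of the invertible $\bold{A}_M$, so $\mathrm{rank}(\overline{\bold{H}}_M) = p-q$. Orthogonality of the ranges gives $\mathrm{rank}(\bold{H}_M + \overline{\bold{H}}_M) = q + (p-q) = p$, and a projection of rank $p$ whose range sits inside the $p$-dimensional space $\mathrm{col}(\bold{X})$ must equal $\bold{H}_F$; rearranging gives $\overline{\bold{H}}_M = \bold{H}_F - \bold{H}_M$.

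The only real obstacle is bookkeeping: noticing that the orthogonality $\bold{H}_M\overline{\bold{H}}_M = \bold{0}$ collapses to the block identity $\bold{A}_{0M}\bold{A}_M^1 = \bold{0}$ once the normal-equation factor $\bold{X}^T\bold{X}\bold{G} = \mathds{I}$ is used. Everything else is standard orthogonal-projection algebra and requires no probabilistic or asymptotic input, which is why the proposition preceding it could be left unproved and this lemma reduces to linear algebra.
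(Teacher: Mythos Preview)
Your proof is correct but takes a genuinely different route from the paper's. The paper's argument is a direct block-matrix computation: it rewrites everything in terms of $\bold{Z} = \bold{X}\bold{A}_M^{-1}$ and the block Gram matrix $\bold{G} = \bold{Z}^T\bold{Z}$, then uses the Schur complement identity $\bold{G}_{11}^{-1} = \bold{G}^{11} - \bold{G}^{10}(\bold{G}^{00})^{-1}\bold{G}^{01}$ to expand $\bold{H}_F - \bold{H}_M$ and collapse it back into the expression for $\overline{\bold{H}}_M$. By contrast, you never touch the Schur complement: you verify that $\overline{\bold{H}}_M$ is an orthogonal projection, observe that the single block identity $\bold{A}_{0M}\bold{A}_M^1 = \bold{0}$ forces $\bold{H}_M\overline{\bold{H}}_M = \bold{0}$, and then conclude by a rank count inside $\mathrm{col}(\bold{X})$. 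Your approach is more geometric and arguably cleaner, since it isolates the one structural fact that makes the lemma work and avoids any block-inverse manipulation; the paper's approach is more computational but has the minor advantage of yielding the identity by explicit algebra without invoking rank or subspace-inclusion arguments.
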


\begin{proof} For simplicity of notations in the remainder of this subsection we omit subscript $M$. Let $\bold{Z}_1 = \bold{X} \bold{A}^1$, $\bold{Z} = \bold{X} \bold{A}^{-1}$ and $\bold{Z}_0 = \bold{X}\bold{A}^0$.
We denote 
\[
 \bold{G} = \left[\begin{array}{cc}
\bold{G}_{11} & \bold{G}_{10} \\
\bold{G}_{01} & \bold{G}_{00}
\end{array}\right] = \left[\begin{array}{cc}
\bold{Z}_1^T \bold{Z}_1 & \bold{Z}_1^T \bold{Z}_0 \\
\bold{Z}_0^T \bold{Z}_1 & \bold{Z}_0^T \bold{Z}_0
\end{array}\right] =  \bold{Z}^T \bold{Z} \text{ and }
  \bold{G}^{-1} = \left[\begin{array}{cc}
\bold{G}^{11} & \bold{G}^{10} \\
\bold{G}^{01} & \bold{G}^{00}
\end{array}\right].
\]
Note that
\[
\bold{H}_F =  \bold{X}(\bold{X}^T\bold{X})^{-1} \bold{X}^T =  \bold{X}\bold{A}^{-1} (\bold{A}^{-T} \bold{X}^T\bold{X} \bold{A}^{-1})^{-1} \bold{A}^{-T} \bold{X}^T = \bold{Z} (\bold{Z}^T\bold{Z})^{-1}\bold{Z}^T.
\]
Moreover
\[
\begin{split}
 (\bold{A}_0(\bold{X}^T\bold{X})^{-1} \bold{A}_0^T )^{-1} = & \left(\bold{A}_0 \bold{A}^{-1}  \left(\bold{A}^{-T} \bold{X}^T \bold{X} \bold{A}^{-1}\right)^{-1} \bold{A}^{-T} \bold{A}_0^T \right)^{-1} \\
 = & \left[ \left[\begin{array}{cc}
\bold{0} & \mathds{I}
\end{array}\right] (\bold{Z}^T\bold{Z} )^{-1}  \left[\begin{array}{c}
\bold{0} \\
 \mathds{I}
\end{array}\right]  \right]^{-1} =( \bold{G}^{00})^{-1}
\end{split}
\]
and
\[
  \bold{A}_0 (\bold{X}^T \bold{X})^{-1}\bold{X}^T = \bold{A}_0 \bold{A}^{-1} (\bold{Z}^T \bold{Z})^{-1}\bold{A}^{-T} \bold{X}^T = \bold{A}_0 \bold{A}^{-1} (\bold{Z}^T \bold{Z})^{-1} \bold{Z}^T.
\]
Then we get from the Schur complement:
\[
\begin{split}
\bold{H}_F - \bold{H}_M &=  \bold{Z} (\bold{Z}^T\bold{Z})^{-1}\bold{Z}^T - \bold{Z}_1 (\bold{Z}_1^T\bold{Z}_1)^{-1}\bold{Z}_1^T =  \bold{Z} \bold{G}^{-1}\bold{Z}^T - \bold{Z}_1 \bold{G}^{-1}_{11}\bold{Z}_1^T \\
&= \bold{Z} \bold{G}^{-1}\bold{Z}^T - \bold{Z}_1 ( \bold{G}^{11} - \bold{G}^{10} (\bold{G}^{00})^{-1} \bold{G}^{10}  )\bold{Z}_1^T \\
&= \left[\begin{array}{cc}
\bold{Z}_1  &
\bold{Z}_0
\end{array}\right] \left[\begin{array}{cc}
\bold{G}^{11} & \bold{G}^{10} \\
\bold{G}^{01} & \bold{G}^{00}
\end{array}\right]  \left[\begin{array}{c}
\bold{Z}_{1}^T \\
\bold{Z}_{0}^T 
\end{array}\right] -  \left[\begin{array}{cc}
\bold{Z}_1  &
\bold{Z}_0
\end{array}\right] \left[\begin{array}{cc}
\bold{G}^{11} - \bold{G}^{10} (\bold{G}^{00})^{-1} \bold{G}^{10}  & \bold{0} \\
\bold{0} & \bold{0}
\end{array}\right]  \left[\begin{array}{c}
\bold{Z}_{1}^T \\
\bold{0}^T
\end{array}\right] \\
 &= \bold{Z} \left[\begin{array}{c}
\bold{G}^{10} \\
\bold{G}^{00}
\end{array}\right]
 (\bold{G}^{00})^{-1}  \left[\begin{array}{cc}
\bold{G}^{01} &
\bold{G}^{00}
\end{array}\right] \bold{Z}^T =  \bold{Z} ( \bold{Z}^T  \bold{Z} )^{-1}\left[\begin{array}{c}
\bold{0} \\
\mathds{I}
\end{array}\right]
 (\bold{G}^{00})^{-1}  \left[\begin{array}{cc}
\bold{0} &
\mathds{I}
\end{array}\right]( \bold{Z}^T  \bold{Z} )^{-1} \bold{Z}^T \\
&= \bold{\bold{X}}  (\bold{X}^T \bold{X})^{-1} \bold{A}_{M}^T \left(\bold{A}_{M} (\bold{X}^T\bold{X})^{-1}\bold{A}_{M}^T \right)^{-1} \bold{A}_{M}   (\bold{X}^T\bold{X})^{-1}\bold{X}^T = \overline{\bold{H}}_M.
\end{split}
\]
\end{proof}
\subsection{Asymptotics for residual sums of squares}
Lemmas concerning dependencies between residual sums of squares have similar construction to those described in \cite{chenchen}. Let us introduce some simplifying notations.
For two sequences of random variables $U_n$ and $V_n$ we write that $U_n <_P V_n$ if $\lim_{n \rightarrow \infty} \mathds{P} \left( U_n < V_n \right)=1$. 

Residual sum of squares for model $M$ can be decomposed into three parts
\[
\begin{split}
RSS_M = \| \bold{y} - \bold{H}_M \bold{y} \|^2 = & (\bold{X} \boldsymbol\beta^* + \boldsymbol\varepsilon)^T (\mathds{I} - \bold{H}_M)(\bold{X} \boldsymbol\beta^* + \boldsymbol\varepsilon) \\
= & \boldsymbol\beta^{*T} \bold{X}^T (\mathds{I} - \bold{H}_M) \bold{X} \boldsymbol\beta^* + 2 \boldsymbol\beta^{*T} \bold{X}^T (\mathds{I} - \bold{H}_M) \boldsymbol\varepsilon + \boldsymbol\varepsilon^T(\mathds{I} - \bold{H}_M) \boldsymbol\varepsilon.
\end{split}
\]
When $T \subseteq M$ we have $\bold{H}_M \bold{X} \boldsymbol\beta^*  = \bold{X} \boldsymbol\beta^*  $ and $RSS_M = \boldsymbol\varepsilon^T (\mathds{I} - \bold{H}_M)\boldsymbol\varepsilon$.
\begin{lemma}\label{lemmaTF}
Assuming $p \prec n$ and  $p \prec r_n$, we have
\[
\log \frac{RSS_T}{RSS_{F}} <_P \frac{r_n}{n}.
\]
\end{lemma}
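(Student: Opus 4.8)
The plan is to reduce the logarithm of the ratio to a relative gap and then control the numerator and the denominator separately by elementary chi-square concentration. First I would use the inequality $\log(1+x)\le x$, valid here since $\mathcal{L}_T\subseteq\mathcal{L}_F$ forces $RSS_T\ge RSS_F$, to write
\[
\log\frac{RSS_T}{RSS_F}=\log\!\left(1+\frac{RSS_T-RSS_F}{RSS_F}\right)\le\frac{RSS_T-RSS_F}{RSS_F},
\]
so that it suffices to prove $(RSS_T-RSS_F)/RSS_F<_P r_n/n$.

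Next I would pin down the laws of the two pieces. Since $T\subseteq F$, the $RSS$ decomposition kills the mean term for both models, so $RSS_T=\boldsymbol\varepsilon^T(\mathds{I}-\bold{H}_T)\boldsymbol\varepsilon$ and $RSS_F=\boldsymbol\varepsilon^T(\mathds{I}-\bold{H}_F)\boldsymbol\varepsilon$, hence $RSS_T-RSS_F=\boldsymbol\varepsilon^T(\bold{H}_F-\bold{H}_T)\boldsymbol\varepsilon$. By Lemma~\ref{lem:hhh}, $\bold{H}_F-\bold{H}_T=\overline{\bold{H}}_T$ is an orthogonal projection, and because $\mathcal{L}_T\subseteq\mathcal{L}_F$ its rank equals $\mathrm{tr}(\bold{H}_F)-\mathrm{tr}(\bold{H}_T)=|F|-|T|=p-|T|$. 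Consequently $RSS_T-RSS_F\sim\sigma^2\chi^2_{p-|T|}$, while $RSS_F\sim\sigma^2\chi^2_{n-p}$ by the Proposition. I expect the verification that $\bold{H}_F-\bold{H}_T$ is genuinely an orthogonal projection of exactly this rank to be the one delicate point; everything after that is routine tail estimation.

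Then I would bound the two parts. For the numerator, $|T|$ is fixed and $p=p_n\prec r_n$, so Markov's inequality applied to $\chi^2_{p-|T|}$ (mean $p-|T|$) gives $\mathbb{P}\!\left(RSS_T-RSS_F\ge\sigma^2 r_n/4\right)\le 4(p-|T|)/r_n\to 0$, that is, $RSS_T-RSS_F<_P\sigma^2 r_n/4$. For the denominator, $p\prec n$ gives $n-p\to\infty$ with $n-p\ge n/2$ eventually, so the weak law of large numbers yields $\chi^2_{n-p}/(n-p)\xrightarrow[]{\ P\ }1$ and hence $RSS_F\ge\sigma^2 n/4$ with probability tending to one.

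Finally I would intersect the two high-probability events. On that intersection,
\[
\frac{RSS_T-RSS_F}{RSS_F}<\frac{\sigma^2 r_n/4}{\sigma^2 n/4}=\frac{r_n}{n},
\]
which together with the first display proves $\log(RSS_T/RSS_F)<_P r_n/n$. Note that no independence between $RSS_F$ and $RSS_T-RSS_F$ is required --- only the two marginal concentration statements and a union bound.
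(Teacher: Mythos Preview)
Your proposal is correct and follows essentially the same route as the paper: reduce via $\log(1+x)\le x$ to $(RSS_T-RSS_F)/RSS_F$, identify the numerator and denominator as $\sigma^2\chi^2_{p-|T|}$ and $\sigma^2\chi^2_{n-p}$ respectively, and control each by elementary concentration. The only cosmetic difference is that the paper packages the ratio as $(p/n)E_n$ with $E_n=O_P(1)$ via Chebyshev, whereas you use Markov on the numerator and the weak law on the denominator with explicit constants; both arguments are equivalent, and the paper likewise simply asserts that $\bold{H}_F-\bold{H}_T$ is an orthogonal projection of rank $p-|T|$ without further justification.
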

\begin{proof}
Observe that
\[
\frac{RSS_T}{RSS_{F}} = 1 + \frac{RSS_T - RSS_{F}}{RSS_{F}} = 1 + \frac{p}{n}E_n,
\]
where 
\[
E_n = \frac{\boldsymbol\varepsilon^T (\bold{H}_{F} - \bold{H}_T) \boldsymbol\varepsilon}{\boldsymbol\varepsilon^T (\mathds{I} - \bold{H}_{f}) \boldsymbol\varepsilon} \cdot \frac{n}{p}.
\]
Let us notice that $\bold{H}_F - \bold{H}_T$ is 
a matrix of an orthogonal projection with rank $p - |T|$. Therefore $W_1 = \boldsymbol\varepsilon^T (\bold{H}_{F} - \bold{H}_T) \boldsymbol\varepsilon \sim \sigma^2 \chi^2_{p - |T|}$ and $W_2 = \boldsymbol\varepsilon^T (\mathds{I} - \bold{H}_{F}) \boldsymbol\varepsilon \sim \sigma^2 \chi^2_{n-p}$.
Then we get
\[
\mathds{E} \left(\frac{W_1}{p} \right) = \frac{\sigma^2(p - |T|)}{p},
\text{ Var} \left(\frac{W_1}{p} \right) = \frac{2\sigma^4(p - |T|)}{p^2}
\]
and since $p$ grows monotonically with $n$ we have either $p \xrightarrow[]{n \rightarrow \infty} \infty$, then $\text{ Var} \left(\frac{W_1}{p} \right) \xrightarrow[]{n \rightarrow \infty} 0$ and from Chebyshev's inequality $\frac{W_1}{p} \xrightarrow[]{n \rightarrow \infty} \sigma^2$ in probability or $p$ is bounded, then $\frac{W_1}{p}$ is bounded in probability. Analogously for $W_2$ we have
\[
\mathds{E} \left(\frac{W_2}{n} \right) = \frac{\sigma^2(n - p)}{n},
\text{ Var} \left(\frac{W_2}{n} \right) = \frac{2\sigma^4(n - p)}{n^2}
\]
and since $p \prec n$ from Chebyshev's inequality $\frac{W_2}{n} \xrightarrow[]{n \rightarrow \infty} \sigma^2$ in probability.

Therefore $E_n = O_P \big(1\big)$ and $\frac{RSS_T}{RSS_{F}} = 1 + O_P \bigg( \frac{p}{n} \bigg)$.
Hence
\[
\log \left( \frac{RSS_T}{RSS_{F}} \right) = \log \left( 1 + \frac{p}{n} E_n \right) \leq \frac{p}{n} E_n = O_P \bigg(\frac{p}{n}\bigg) <_P \frac{r_n}{n}.
\]
\end{proof}

\begin{lemma}\label{lemmaRSS}
Assuming that $p \prec \Delta$ ($\Delta$ is defined in equation (\ref{eq:DD})) we have for all $\delta > 1$ 
\[
\min_{M \in \mathcal{M}_\mathcal{V} } \left( \log \left( \frac{RSS_{M}}{RSS_T} \right) \right) \geq_P \log \left( 1+ \frac{\Delta}{\delta \sigma^2 \cdot n} \right).
\]
\end{lemma}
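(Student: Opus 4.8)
The plan is to exploit the rank-one structure of every model in $\mathcal{M}_\mathcal{V}$ and, crucially, to compare $RSS_M$ with $RSS_T$ not directly (the two models need not be nested) but by a detour through the full model $F$. First I would record the structural identities. For $M\in\mathcal{M}_\mathcal{V}$ the constraint matrix $\bold{A}_{0M}$ has exactly one row, so $\overline{\bold{H}}_M$ has rank one, say $\overline{\bold{H}}_M=\boldsymbol{v}_M\boldsymbol{v}_M^T$ with $\|\boldsymbol{v}_M\|=1$, and by Lemma~\ref{lem:hhh} this makes $\mathds{I}-\bold{H}_M=(\mathds{I}-\bold{H}_F)+\overline{\bold{H}}_M$ an orthogonal decomposition. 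Since $\bold{H}_F\bold{X}\boldsymbol\beta^*=\bold{X}\boldsymbol\beta^*$, we get $RSS_M=RSS_F+(\boldsymbol{v}_M^T\bold{y})^2$ and, by (\ref{eq:Delta}), $\Delta_M=(\boldsymbol{v}_M^T\bold{X}\boldsymbol\beta^*)^2$. Writing $g_M:=\boldsymbol{v}_M^T\boldsymbol\varepsilon\sim\mathcal{N}(0,\sigma^2)$ and $R_n:=\max_{M\in\mathcal{M}_\mathcal{V}}|g_M|$, it follows that $(\boldsymbol{v}_M^T\bold{y})^2=\big(\boldsymbol{v}_M^T\bold{X}\boldsymbol\beta^*+g_M\big)^2\ge\Delta_M-2\sqrt{\Delta_M}\,R_n$ for every $M\in\mathcal{M}_\mathcal{V}$.

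Next I would control $R_n$ uniformly. Each $M\in\mathcal{M}_\mathcal{V}$ arises from $F$ by imposing exactly one of its elementary constraints, and $F$ has at most $(p-1)+\binom{p}{2}\le p^2$ of these, so $|\mathcal{M}_\mathcal{V}|\le p^2$; a union bound together with the Gaussian tail inequality $\mathds{P}(|g_M|>t)\le 2e^{-t^2/(2\sigma^2)}$ then gives $R_n=O_P\big(\sqrt{\log(p\vee 2)}\big)$. Since the hypothesis $p\prec\Delta$ forces $\Delta\to\infty$ and $\log(p\vee 2)\prec p\prec\Delta$, we obtain $R_n/\sqrt{\Delta}\to 0$ in probability. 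As $x\mapsto x-2\sqrt{x}\,R_n$ is increasing on $(R_n^2,\infty)$ and $\Delta>R_n^2$ with probability tending to one, on that event $(\boldsymbol{v}_M^T\bold{y})^2\ge\Delta_M-2\sqrt{\Delta_M}\,R_n\ge\Delta-2\sqrt{\Delta}\,R_n=\Delta\big(1-2R_n/\sqrt{\Delta}\big)$ for all $M\in\mathcal{M}_\mathcal{V}$ simultaneously (using $\Delta_M\ge\Delta$).

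Finally the detour: $RSS_M-RSS_T=(RSS_M-RSS_F)-(RSS_T-RSS_F)=(\boldsymbol{v}_M^T\bold{y})^2-V$, where $V:=RSS_T-RSS_F=\boldsymbol\varepsilon^T(\bold{H}_F-\bold{H}_T)\boldsymbol\varepsilon\sim\sigma^2\chi^2_{p-|T|}$; since $\mathds{E}V\le\sigma^2 p$, Markov's inequality gives $V=O_P(p)$ and hence $V/\Delta\to 0$ in probability. Combining the previous bounds, with probability tending to one and uniformly in $M\in\mathcal{M}_\mathcal{V}$,
\[
RSS_M-RSS_T\ \ge\ \Delta\Big(1-\tfrac{2R_n}{\sqrt{\Delta}}-\tfrac{V}{\Delta}\Big)\ =:\ \Delta(1-\eta_n),\qquad \eta_n\to 0\text{ in probability.}
\]
Since $RSS_T\sim\sigma^2\chi^2_{n-|T|}$ gives $RSS_T/(\sigma^2 n)\to 1$ in probability and $\delta>1$ is fixed, we have $RSS_T\le\delta(1-\eta_n)\sigma^2 n$ with probability tending to one, whence on the intersection of these events $RSS_M/RSS_T=1+(RSS_M-RSS_T)/RSS_T\ge 1+\Delta(1-\eta_n)/RSS_T\ge 1+\Delta/(\delta\sigma^2 n)$ for every $M\in\mathcal{M}_\mathcal{V}$; taking logarithms and then the minimum over $M$ proves the claim. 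The step I expect to be the main obstacle is exactly this detour: because $M$ and $T$ are not nested, $RSS_M-RSS_T$ is not a single quadratic form, and a crude estimate $RSS_M\ge\Delta_M-(\text{noise})$ against $RSS_T=O_P(n)$ would lose a factor of order $n/\Delta$; what rescues the argument is that the noise discrepancy $RSS_T-RSS_F$ between $T$ and $F$ is only of order $p\prec\Delta$. The other delicate ingredient is the \emph{uniform} control of $R_n$ over the growing family $\mathcal{M}_\mathcal{V}$, which again rests on $\log|\mathcal{M}_\mathcal{V}|\prec\Delta$.
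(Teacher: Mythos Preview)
Your argument is correct, but it takes a different route from the paper and works harder than necessary because of one misconception: you treat $\mathcal{M}_\mathcal{V}$ as a ``growing family'' and invest effort in a union bound $R_n=O_P(\sqrt{\log p})$, whereas under the paper's standing assumption that $T$ is fixed (finite and independent of $n$), the number of \emph{false} elementary constraints is bounded uniformly in $n$, so $|\mathcal{M}_\mathcal{V}|$ is bounded. The paper exploits this: it expands $RSS_M-RSS_T$ directly as $\Delta_M+S_M+W_T-W_M$ with $S_M\sim\mathcal{N}(0,4\sigma^2\Delta_M)$, $W_T\sim\sigma^2\chi^2_{|T|}$, $W_M\sim\sigma^2\chi^2_{p-1}$, observes that each ratio $S_M/\Delta_M$, $W_T/\Delta_M$, $W_M/\Delta_M$ is $o_P(1)$ by Chebyshev and $p\prec\Delta\le\Delta_M$, and then simply notes that a maximum over finitely many $o_P(1)$ terms is again $o_P(1)$. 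No detour through $F$, no rank-one representation, no Gaussian tail bound is needed.

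Your detour $RSS_M-RSS_T=(RSS_M-RSS_F)-(RSS_T-RSS_F)$ together with the rank-one identity $RSS_M-RSS_F=(\bold v_M^T\bold y)^2$ is a perfectly legitimate alternative; in fact your cross term $2(\bold v_M^T\bold X\boldsymbol\beta^*)g_M$ coincides with the paper's $S_M$, and your $g_M^2-V$ replaces the paper's $W_T-W_M$ (both are $o_P(\Delta)$). What your approach buys is robustness: it would still go through if $|\mathcal{M}_\mathcal{V}|$ grew polynomially in $p$, since you only need $\log|\mathcal{M}_\mathcal{V}|\prec\Delta$. What the paper's approach buys is brevity: once $|\mathcal{M}_\mathcal{V}|$ is bounded, uniformity is free and the whole argument is three lines.
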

\begin{proof}
Using the fact that 
\[
\frac{1}{n} RSS_T = \frac{\boldsymbol\varepsilon^T (\mathds{I} - \bold{H}_T) \boldsymbol\varepsilon}{n} = \sigma^2 + o_P \big(1\big)
\]
and denoting
\[
RSS_{M} - RSS_T = \Delta_{M} + S_{M} + W_{T} - W_{M},
\]
where 
\[
\Delta_{M} = \boldsymbol\beta^{*T} \bold{X}^T (\mathds{I} - \bold{H}_{M}) \bold{X} \boldsymbol\beta^*, \ 
S_{M} = 2\boldsymbol\beta^{*T} \bold{X}^T (\mathds{I} - \bold{H}_{M}) \boldsymbol\varepsilon, \ 
W_{T} = \boldsymbol\varepsilon^T \bold{H}_T \boldsymbol\varepsilon \text{ and } W_{M} = \boldsymbol\varepsilon^T \bold{H}_{M} \boldsymbol\varepsilon.
\]
Note that 
\[
\Delta_{M} \geq \Delta,\ S_{M} \sim \mathcal{N} (0, 4\sigma^2 \Delta_{M}),\  W_{T} \sim \sigma^2 \chi^2_{|T|} \text{ and } W_{M} \sim \sigma^2 \chi^2_{p - 1}.
\]
Using assumption, $\frac{S_{M}}{\Delta_{M}}$, $\frac{W_{T}}{\Delta_{M}}$ and $\frac{W_{M}}{\Delta_{M}}$  are $o_P \big(1\big)$ from Chebyshev's inequality.  Since the dimension of the true model $T$ is finite and independent of $n$, so is the number of models in $\mathcal{M}_\mathcal{V}$ and we have
\[
RSS_{M} - RSS_T = \Delta_{M} \left( 1 + \frac{S_{M}}{\Delta_{M}} + \frac{W_{T}}{\Delta_{M}} - \frac{W_{M}}{\Delta_{M}} \right) = \Delta_{M} \Big(1 + o_P \big(1\big) \Big) \geq \Delta \Big(1 + o_P \big(1\big) \Big).
\]
As a result 
\[
\begin{split}
\log \frac{RSS_{M}}{RSS_T} = & \log \left( 1 + \frac{RSS_{M} - RSS_T}{RSS_T} \right) >_P \log \left( 1 + \frac{\Delta}{\delta \sigma^2 n} \right) 
\text{ for } \delta > 1. 
\end{split}
\]
\end{proof}
\begin{lemma}\label{lemma1W}
Assuming that $p \prec \Delta$  we have
\[
\max _{M \in \mathcal{M}_\mathcal{T}} \Big(\log RSS_M \Big)  <_P \min_{M \in \mathcal{M}_\mathcal{V}} \Big(\log RSS_M\Big),
\]
\end{lemma}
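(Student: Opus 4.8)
The plan is to prove the inequality by squeezing $\log RSS_T$ strictly between the two sides: I would show that the left-hand side never exceeds $\log RSS_T$ (a sure inequality coming from monotonicity of $RSS$ in the model), while by Lemma~\ref{lemmaRSS} the right-hand side exceeds $\log RSS_T$ by a strictly positive, though vanishing, margin with probability tending to $1$. Adding the two facts then yields the claim.

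For the left-hand side, I would take any $M\in\mathcal{M}_\mathcal{T}$. By definition of $\mathcal{M}_\mathcal{T}$ we have $T\subseteq M$, so by~(\ref{eqINCL}) $\mathcal{L}_T\subseteq\mathcal{L}_M$. Since $RSS_M=\min_{\boldsymbol\beta\in\mathcal{L}_M}\|\bold{y}-\bold{X}\boldsymbol\beta\|^2$ and minimizing over a larger linear space cannot increase the minimum, $RSS_M\le RSS_T$ for every realization, hence
\[
\max_{M\in\mathcal{M}_\mathcal{T}}\log RSS_M \ \le\ \log RSS_T
\]
surely. I would stress that this step needs no finiteness of $\mathcal{M}_\mathcal{T}$, which matters because the number of true elementary constraints (e.g.\ the vanishing coefficients of noise regressors) may grow with $n$, so no union bound is available — or needed — on this side.

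For the right-hand side, fix any $\delta>1$. Each $M\in\mathcal{M}_\mathcal{V}$ satisfies $T\nsubseteq M$, i.e.\ $\boldsymbol\beta^*\notin\mathcal{L}_M$, so full rank of $\bold{X}$ forces $\Delta_M=\|\bold{X}\boldsymbol\beta^*-\bold{X}\boldsymbol\beta_M^*\|^2>0$ and hence $\Delta>0$. Since $\log RSS_T$ does not depend on $M$, Lemma~\ref{lemmaRSS} rewrites as
\[
\min_{M\in\mathcal{M}_\mathcal{V}}\log RSS_M \ \geq_P\ \log RSS_T + \log\!\Big(1+\tfrac{\Delta}{\delta\sigma^2 n}\Big),
\]
and the added logarithm is strictly positive because $\Delta>0$.

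Combining the two displays, on an event of probability tending to $1$ we get
\[
\max_{M\in\mathcal{M}_\mathcal{T}}\log RSS_M \ \le\ \log RSS_T \ <\ \log RSS_T + \log\!\Big(1+\tfrac{\Delta}{\delta\sigma^2 n}\Big) \ \le\ \min_{M\in\mathcal{M}_\mathcal{V}}\log RSS_M ,
\]
which is exactly $\max_{M\in\mathcal{M}_\mathcal{T}}\log RSS_M <_P \min_{M\in\mathcal{M}_\mathcal{V}}\log RSS_M$. I do not expect a genuine obstacle here: all the probabilistic content — uniform control of the cross term $S_M$ and of the quadratic forms $W_T,W_M$ over the finitely many models in $\mathcal{M}_\mathcal{V}$ — is already packaged in Lemma~\ref{lemmaRSS}, and the bound on the $\mathcal{M}_\mathcal{T}$ side is pure linear algebra; the only point deserving a word of care is precisely that the latter bound holds uniformly even when $|\mathcal{M}_\mathcal{T}|$ diverges.
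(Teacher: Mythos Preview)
Your proof is correct and follows essentially the same route as the paper: both use the sure inequality $\max_{M\in\mathcal{M}_\mathcal{T}}\log RSS_M \le \log RSS_T$ from $T\subseteq M$, invoke Lemma~\ref{lemmaRSS} to place $\min_{M\in\mathcal{M}_\mathcal{V}}\log RSS_M$ strictly above $\log RSS_T$ by the positive margin $\log(1+\Delta/(\delta\sigma^2 n))$, and chain the two. Your added remark that the deterministic bound on the $\mathcal{M}_\mathcal{T}$ side requires no union bound (so growing $|\mathcal{M}_\mathcal{T}|$ is harmless) is a worthwhile clarification the paper leaves implicit.
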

\begin{proof}
For $\delta > 1$ let us denote $a =  \log \left( 1 + \frac{\Delta}{\delta \sigma^2 n} \right)$, then from Lemma \ref{lemmaRSS} we get
\[
\begin{split}
\min _{M \in \mathcal{M}_\mathcal{V}} \Big(\log RSS_M \Big)  >_P & \log RSS_T + a
\geq \max _{M \in \mathcal{M}_\mathcal{T}} \Big(\log RSS_M \Big)  + a \\
\geq & \max _{M \in \mathcal{M}_\mathcal{T}} \Big(\log RSS_M \Big). 
\end{split}
\]
\end{proof}

\subsection{Ordering of squared t-statistics}

In this section we show that ordering of models $M \in \mathcal{M}_ \mathcal{T} \cup \mathcal{M}_\mathcal{V}$ with respect to squared t-statistics is equivalent to ordering them with respect to the values of residual sum of squares.

Let $t_M$, where $M \in \mathcal{M}_ \mathcal{T} \cup \mathcal{M}_\mathcal{V}$  denote t-statistic for the full model with one elementary constraint  $ \bold{A}_{0M} \beta = 0$.
\begin{lemma}\label{lT2}
If $p \prec \Delta$, then
\[
\max_{M \in \mathcal{M}_\mathcal{T}} t^2_M  <_P \min_{M \in \mathcal{M}_\mathcal{V}} t^2_M.
\] 

\end{lemma}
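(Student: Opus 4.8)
The plan is to reduce the lemma to Lemma~\ref{lemma1W}. The point is that, for each model $M\in\mathcal{M}_\mathcal{T}\cup\mathcal{M}_\mathcal{V}$, the squared t-statistic $t^2_M$ is a strictly increasing function of $RSS_M$ --- and it is the \emph{same} function of $RSS_M$ for every such $M$. Hence ranking the models of $\mathcal{M}_\mathcal{T}\cup\mathcal{M}_\mathcal{V}$ by squared t-statistics coincides with ranking them by residual sum of squares, and the assertion follows at once from the RSS comparison already established in Lemma~\ref{lemma1W}.

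First I would record the identity
\[
t^2_M=(n-p)\,\frac{RSS_M-RSS_F}{RSS_F},\qquad M\in\mathcal{M}_\mathcal{T}\cup\mathcal{M}_\mathcal{V}.
\]
For such an $M$ the constraint matrix $\bold{A}_{0M}$ consists of a single row, say $\bold{a}^T$, so $\overline{\bold{H}}_M$ is a rank-one orthogonal projection and, by Lemma~\ref{lem:hhh},
\[
RSS_M-RSS_F=\bold{y}^T(\bold{H}_F-\bold{H}_M)\bold{y}=\bold{y}^T\overline{\bold{H}}_M\bold{y}
=\frac{(\bold{a}^T\widehat{\boldsymbol\beta})^2}{\bold{a}^T(\bold{X}^T\bold{X})^{-1}\bold{a}},
\]
where $\widehat{\boldsymbol\beta}=(\bold{X}^T\bold{X})^{-1}\bold{X}^T\bold{y}$ is the full-model OLS estimator. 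Since $\widehat{Var}(\bold{a}^T\widehat{\boldsymbol\beta})=\widehat{\sigma}^2\,\bold{a}^T(\bold{X}^T\bold{X})^{-1}\bold{a}$ with $\widehat{\sigma}^2=RSS_F/(n-p)$, dividing yields $t^2_M=(\bold{a}^T\widehat{\boldsymbol\beta})^2/\widehat{Var}(\bold{a}^T\widehat{\boldsymbol\beta})=(n-p)(RSS_M-RSS_F)/RSS_F$. The computation is the same for elementary constraints of type (\ref{eqHi}) and of type (\ref{eqHij}), since each is a single linear restriction, and it reproduces the formulas for $t^2_{1jk}$ and $t^2_{ijk}$ in Step~1 of Algorithm~\ref{alg:DMR}.

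Since $RSS_F>0$ almost surely and $RSS_F$, $n$, $p$ do not depend on $M$, the map $\varphi(x)=(n-p)(x-RSS_F)/RSS_F$ is one fixed strictly increasing function with $t^2_M=\varphi(RSS_M)$ for all $M\in\mathcal{M}_\mathcal{T}\cup\mathcal{M}_\mathcal{V}$. Therefore $\max_{M\in\mathcal{M}_\mathcal{T}}t^2_M=\varphi(\max_{M\in\mathcal{M}_\mathcal{T}}RSS_M)$ and $\min_{M\in\mathcal{M}_\mathcal{V}}t^2_M=\varphi(\min_{M\in\mathcal{M}_\mathcal{V}}RSS_M)$, so by monotonicity of $\varphi$ and of $\log$ the inequality $\max_{M\in\mathcal{M}_\mathcal{T}}t^2_M<\min_{M\in\mathcal{M}_\mathcal{V}}t^2_M$ is equivalent to $\max_{M\in\mathcal{M}_\mathcal{T}}\log RSS_M<\min_{M\in\mathcal{M}_\mathcal{V}}\log RSS_M$. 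Under $p\prec\Delta$ the latter holds with probability tending to one by Lemma~\ref{lemma1W}, which is precisely $\max_{M\in\mathcal{M}_\mathcal{T}}t^2_M<_P\min_{M\in\mathcal{M}_\mathcal{V}}t^2_M$. There is no delicate probabilistic step here; the one thing to verify carefully is that the identity $t^2_M=\varphi(RSS_M)$ holds uniformly over both families and both constraint types, which is automatic because every $M\in\mathcal{M}_\mathcal{T}\cup\mathcal{M}_\mathcal{V}$ differs from the full model $F$ by exactly one elementary constraint.
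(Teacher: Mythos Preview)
Your proof is correct and follows essentially the same route as the paper: you derive the identity $t^2_M=(n-p)(RSS_M-RSS_F)/RSS_F$ via Lemma~\ref{lem:hhh} and then invoke Lemma~\ref{lemma1W}, exactly as the paper does. The only cosmetic difference is that you make the monotonicity explicit through the map $\varphi$, whereas the paper simply says ``from Lemma~\ref{lemma1W} we get the conclusion.''
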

\begin{proof}
From Lemma \ref{lem:hhh} we get that
\[
RSS_{M} - RSS_{F} = \bold{y}^T (\bold{H}_F - \bold{H}_M) \bold{y} = \widehat{\boldsymbol\beta}^T \bold{A}_{0M}^T ( \bold{A}_{0M} (\bold{X}^T \bold{X})^{-1}  \bold{A}_{0M}^T)^{-1}  \bold{A}_{0M} \widehat{\boldsymbol\beta},
\]
where $\widehat{\boldsymbol\beta} = (\bold{X}^T\bold{X})^{-1}\bold{X}^T\bold{y}$.
Hence for for each $M \in \mathcal{M}_ \mathcal{T} \cup \mathcal{M}_\mathcal{V}$ 
\[
\begin{split}
t^2_M = & \frac{( \bold{A}_{0M} \widehat{\boldsymbol\beta})^2}{\widehat{\text{Var}}( \bold{A}_{0M} \widehat{\boldsymbol\beta})} = \frac{( \bold{A}_{0M} \widehat{\boldsymbol\beta})^2}{ \bold{A}_{0M} \widehat{\text{Var}}(\widehat{\boldsymbol\beta})  \bold{A}_{0M}^T} = \frac{( \bold{A}_{0M} \widehat{\boldsymbol\beta})^2}{\widehat{\sigma}^2  \bold{A}_{0M} (\bold{X}^T \bold{X})^{-1}  \bold{A}_{0M}^T} = \frac{RSS_{M} - RSS_{F}}{\widehat{\sigma}^2},
\end{split}
\]
where $\widehat{\sigma}^2 = \frac{RSS_{F}}{n - |F|}$. Observe that $ \bold{A}_{0M}$ is $1 \times |F|$ matrix, thus
\[
t_M^2 = (n - |F|)\frac{RSS_{M} - RSS_{F}}{RSS_{F}},
\]
and from Lemma \ref{lemma1W} we get the conclusion.
\end{proof}

\subsection{Correct ordering of constraints using hierarchical clustering}
In this subsection we state conditions under which the true model $T$ belongs to the path of nested models obtained in step 4 of DMR algorithm. 

Temporarily let us limit the analysis to a model consisting of one factor and no continuous variables. The true partition of set $\{ 1,\ldots, p_1 \}$ will be denoted by $P^*_1 = ( C_{i1}^* )_{i=1}^{|T|}$. We say that distance matrix $\bold{D} = [d_{ij}]_{ij}$ is consistent with the true partition if dissimilarity measures for elements within the same clusters are smaller than for elements from different clusters:
\begin{equation}\label{eqDISS}
\max_{l \in \{1, \ldots, |T|\}} \max_{i,j \in C^*_{l1}} d_{ij} = d^{true} < d^{false} =  \min_{\substack{l_1, l_2 \in \{1, \ldots, |T|\} \\ l_1 \neq l_2}} \min_{i \in C^*_{l_11}, j \in C^*_{l_21}} d_{ij}.
\end{equation}
Let $P_{s1} = ( C_{is1} )_{i=1}^{p_1 - s +1}$ denote a partition of set $\{1, \ldots, p_1\}$ in step  $s$ of hierarchical clustering algorithm, $s=1, \ldots, p_1$. We will name aggregation of $C_{i_ss1}$ and $C_{j_ss1}$  in step $s$ compatible with the true partition $P^*_1$ if there exist $l \in \{ 1, \ldots, |T| \}$, $i_{s+1} \in \{ 1,\ldots, p_1 - s \}$ and $i_s \neq j_s$, $i_s, j_s \in \{ 1, \ldots p_1 - s+1 \}$ such that
\[
C_{i_{s+1}s+11}= C_{i_ss1} \cup C_{j_ss1} \ ,\ C_{i_{s+1}s+11} \subseteq C^*_{l1}.
\]
Cutting height in step $s$ is defined as $h_{s1} = d(C_{i_ss1}, C_{j_ss1})$ if $C_{i_ss1}$ and $C_{j_ss1}$ are aggregated in this step, $\bold{h}_1 = \left( h_{11}, \ldots, h_{p_1-1,1} \right)$.
\begin{lemma}\label{lclust}
Assuming that the linkage criterion of hierarchical clustering algorithm satisfies:
\begin{equation}\label{eqMONO}
\begin{split}
d &\left( C_{i_{s+1}s+1k} = C_{i_ssk}  \cup C_{j_ssk} , C_{j_{s+1}s+1k} = C_{o_ssk}  \right) \\
&= b \min \left\{ d \left( C_{i_ssk}, C_{o_ssk}  \right), d \left( C_{j_ssk}, C_{o_ssk}  \right)  \right\} \\
&+ (1-b)\max \left\{ d \left( C_{i_ssk}, C_{o_ssk}  \right), d \left( C_{j_ssk}, C_{o_ssk}  \right)  \right\},
\end{split}
\end{equation}
where $b \in [0,1] $
and the dissimilarity matrix has property (\ref{eqDISS}), then the cutting heights for aggregations compatible with $P^*_1$ are lower than $d^{true}$ and cutting heights for aggregations not compatible with $P^*_1$ are larger than $d^{false}$.
\end{lemma}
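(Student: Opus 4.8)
The plan is to prove, by induction on the clustering step $s$, an invariant that is stronger than the statement: for every $s \in \{1,\dots,p_1-|T|+1\}$ the partition $P_{s1}$ is a refinement of $P^*_1$, and moreover the current dissimilarities satisfy two bounds — any $d(C_{is1},C_{js1})$ with $C_{is1},C_{js1}$ contained in the \emph{same} true cluster is $\le d^{true}$, while any $d(C_{is1},C_{js1})$ with $C_{is1},C_{js1}$ contained in \emph{different} true clusters is $\ge d^{false}$. The base case $s=1$ is immediate, since all clusters are singletons, $d(\{i\},\{j\})=d_{ij}$, and (\ref{eqDISS}) is exactly these two bounds.

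For the inductive step, assume the invariant at step $s$ with $s\le p_1-|T|$, so $P_{s1}$ is a \emph{strict} refinement of $P^*_1$ and some true cluster contains at least two current clusters. Those two sit at dissimilarity $\le d^{true}<d^{false}$, so the minimizing pair chosen by the linkage rule has dissimilarity $\le d^{true}$ and — since any cross-true-cluster pair has dissimilarity $\ge d^{false}$, strictly above the achieved minimum, ties cannot interfere — must consist of two current clusters inside one true cluster $C^*_{l1}$. Hence $h_{s1}\le d^{true}$, the aggregation is compatible with $P^*_1$ (the union of two subsets of $C^*_{l1}$ is again a subset of $C^*_{l1}$), and $P_{s+1,1}$ is again a refinement of $P^*_1$. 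To propagate the two bounds I invoke the linkage-update formula (\ref{eqMONO}): the only dissimilarities that change are those from the merged cluster $C_{i_{s+1}s+11}=C_{i_ss1}\cup C_{j_ss1}$ to some $C_{o_ss1}$, and each equals the convex combination $b\min+(1-b)\max$ of the two old dissimilarities $d(C_{i_ss1},C_{o_ss1})$ and $d(C_{j_ss1},C_{o_ss1})$; if $C_{o_ss1}$ lies in $C^*_{l1}$ both old values were $\le d^{true}$ so the new one is $\le d^{true}$, and if $C_{o_ss1}$ lies in another true cluster both were $\ge d^{false}$ so the new one is $\ge d^{false}$. This closes the induction.

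It follows that steps $s=1,\dots,p_1-|T|$ are precisely the aggregations compatible with $P^*_1$, each with cutting height $\le d^{true}$, and that at step $s=p_1-|T|+1$ one has exactly $|T|$ current clusters, each contained in a distinct true cluster, hence $P_{s1}=P^*_1$. From that point every current cluster is a full true cluster, so all pairwise dissimilarities are $\ge d^{false}$; by the same convex-combination argument applied to (\ref{eqMONO}) this lower bound is preserved under every further merge, so the cutting heights $h_{s1}$ for $s\ge p_1-|T|+1$ are all $\ge d^{false}$, and each such merge is incompatible with $P^*_1$ since its output is a proper union of true clusters, not contained in any $C^*_{l1}$. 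This is the claimed dichotomy.

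The bookkeeping with the partitions is routine; the one place the hypothesis on the linkage rule is genuinely used — and the step I expect to be the crux — is the propagation of the two-sided bound, where it is exactly the assumption that a merged dissimilarity is a convex combination of the two constituent dissimilarities (and not, say, a size-weighted average or a Ward-type update, which could drag a cross-cluster distance below $d^{false}$) that keeps the bounds $\le d^{true}$ and $\ge d^{false}$ stable; together with the simple but essential observation that $d^{true}<d^{false}$ forces every minimizing pair to be within-cluster until $P^*_1$ is reached.
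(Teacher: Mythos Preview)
Your proof is correct and follows essentially the same inductive approach as the paper: both argue by induction on the clustering step, using the convex-combination form of the linkage update (\ref{eqMONO}) to propagate the two-sided bounds $\le d^{true}$ and $\ge d^{false}$ on the current dissimilarities, which forces all compatible merges to precede all incompatible ones. Your version is somewhat more explicit in stating the full invariant (refinement plus both bounds) and in verifying that the minimizing pair must be within a true cluster whenever $P_{s1}$ is a strict refinement of $P^*_1$, but the underlying argument is the same.
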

\begin{proof}
From  (\ref{eqDISS}) if $|T| = p_1$  the statement holds trivially and if $|T| < p_1$ aggregation in the first step is compatible with $P^*_1$. We assume that in step $s$ aggregation is compatible with the true partition with cutting height not greater than $d^{true}$. 
If aggregation of $C_{i_{s+1}s+1,1}= C_{i_ss1} \cup C_{j_ss1} $ and $C_{j_{s+1}s+1,1}= C_{o_ss1}  $ is compatible with $P^*_1$ then
\[
h_{s1} = d \left( C_{i_{s+1}s+11}, C_{j_{s+1}s+11} \right) \leq \max \left( d \left( C_{i_ss1}, C_{o_ss1} \right) , d \left( C_{j_ss1} , C_{o_ss1}\right)  \right) \leq d^{true}
\]

If aggregation of $C_{i_{s+1}s+11}= C_{i_ss1} \cup C_{j_ss1} $ and $C_{j_{s+1}s+11}= C_{o_ss1}  $ is not compatible with $P^*_1$ then
\[
h_{s1} = d \left( C_{i_{s+1}s+11}, C_{j_{s+1}s+11} \right) \geq \min \left( d \left( C_{i_ss1}, C_{o_ss1} \right) , d \left( C_{j_ss1} , C_{o_ss1}\right)  \right)  \geq d^{false}
\]

Hence,  cutting heights $h_{11}, \ldots, h_{p_1-|T|,1}$ not greater than $d^{true}$ are used until all aggregations compatible with $P^*_1$ are performed.  We have $C_{ p_1 - |T| + 11} = P^*_1$ and in steps $s = p_1 - |T| + 2, \ldots, p_1$ the true partition $P^*_1$ is a subpartition of $C_{s1}$ and cutting heights  $h_{p_1-|T|+11}, \ldots, h_{p_1 - 11} $  are not less than $d^{false}$. 
\end{proof}
Note that linkage criteria: single, complete and average satisfy assumption (\ref{eqMONO}).

\begin{proof}[\textbf{\emph{Proof of Theorem 1a}}] 

Let us denote the path of nested models from step 4 of DMR algorithm by $J = \{M_0, \ldots, M_{p - 1} \}$.
The event of erroneous selection of the model by DMR algorithm is a subset of a sum of three events:
\[
\begin{split}
\{ \widehat{T} \neq T \} &\subseteq \{ T \notin  J \} \cup \{ T \in  J,  \text{GIC}_T \geq \min_{M \subsetneq T  } \text{GIC}_M \}\\
 & \cup \{T \in  J, \text{GIC}_T \geq \min_{T \subsetneq M  } \text{GIC}_M \} \\
&\subseteq \{ T \notin J\} \cup \{ \text{GIC}_T \geq \min_{M \subsetneq T  } \text{GIC}_M \}
 \cup \{ \text{GIC}_T \geq \min_{T \subsetneq M  } \text{GIC}_M \}.
\end{split} 
\]
We will show that the probability of each of them tends to zero when $n \rightarrow \infty$.

Using Lemma \ref{lT2} let us consider constant $h_*$ such that
\[
\max_{M \in \mathcal{M}_\mathcal{T}} t^2_M <_P h_*  <_P \min_{M \in \mathcal{M}_\mathcal{V}} t^2_M.
\] 
It is obvious that cutting heights for true constraints for continuous variables are smaller than $h_*$ and for false ones greater than $h_*$. It also follows from Lemma \ref{lT2} that dissimilarity matrices used in the algorithm are consistent with the partitions for model $T$. Then, applying  Lemma \ref{lclust} for each factor, we get that the cutting heights for aggregations compatible with the true partitions are not greater than $h_*$ and for incompatible ones not smaller than $h_*$. Hence, in DMR algorithm accepting true constraints precede accepting false ones, for large $n$ the probability that the true model lies on the path of nested models tends to 1. 

Since  $\min_{T \subsetneq M } RSS_M \geq RSS_F$ we have 
\[
 \{ \text{GIC}_T \geq \min_{T \subsetneq M  } \text{GIC}_M \} \subseteq  \{ \log RSS_T \geq \log RSS_F + \frac{r_n}{n} \}
\]
and  from Lemma \ref{lemmaTF} we know that
$$
\mathbb{P} \left(  \log RSS_T \geq \log RSS_F + \frac{r_n}{n}  \right)   \xrightarrow[]{\ P \ } 0.
$$
It is obvious that
\[
 \{ \text{GIC}_T \geq \min_{M \subsetneq T  } \text{GIC}_M \} \subseteq  \{  \log RSS_T \geq  \min_{M \in \mathcal{M}_\mathcal{V} } \log RSS_M - \frac{|T|r_n}{n} \}.
\]
Let us notice from assumptions of theorem that $\frac{|T|r_n}{n} \prec \frac{\Delta}{\delta \sigma^2 n + \Delta} \leq \log \left( 1+\frac{\Delta}{\delta \sigma^2 n } \right)$. Then
\[ 
\left\{  \frac{|T|r_n}{n} \geq  \min_{M \in \mathcal{M}_\mathcal{V} } \log  \frac{RSS_M}{RSS_T} \right\} \supseteq \left\{  \log \left( 1+\frac{\Delta}{\delta \sigma^2 n } \right) \geq  \min_{M \in \mathcal{M}_\mathcal{V} } \log \frac{RSS_M}{RSS_T}  \right\}
\]
and  from Lemma \ref{lemmaRSS} we know that
\[
\mathbb{P} \left(  \log \left( 1+\frac{\Delta}{\delta \sigma^2 n } \right) \geq  \min_{M \in \mathcal{M}_\mathcal{V} } \log  \frac{RSS_M}{RSS_T}    \right)   \xrightarrow[]{\ P \ } 0.
\]

Hence, DMR algorithm is a consistent model selection method.
\end{proof}
\begin{proof}[\textbf{\emph{Proof of Theorem 1b}}]
Let us denote
\[
\bold{g}_n = \sqrt{n}\left( \widehat{\boldsymbol\beta}_{T} - \boldsymbol\beta^* \right) \text{ and } \bold{b}_n =  \sqrt{n}\left( \widehat{\boldsymbol\beta}_{\widehat{T}} -  \boldsymbol\beta^*  \right),
\] 
Notice that $\bold{g}_n = \bold{b}_n$ 
if $\widehat{T} = T$. From Theorem 1a
\[
 \mathbb{P} \left(\mathds{1}(\widehat{T} \neq T) = 0 \right) \xrightarrow[]{\ P \ } 1.
\]
Since
\[
\left\{ \mathds{1}(\widehat{T} \neq T) = 0 \right\} \subseteq  \left\{ \bold{b}_n \mathds{1}(\widehat{T} \neq T) = 0 \right\},
\]
hence $\bold{b}_n \mathds{1}(\widehat{T} \neq T) \xrightarrow[]{\ P \ } 0$. From properties of the OLS estimator we have 
\[
\bold{g}_n \mathds{1}(\widehat{T} = T) \xrightarrow[]{\ d \ } \mathds{N}(0, \sigma^2 \bold{\Sigma}_T).
\]
Henceforth, from multidimensional Slutsky's theorem we get
\[
\bold{b}_n = \bold{b}_n \mathds{1}(\widehat{T} \neq T) + \bold{b}_n \mathds{1}(\widehat{T} = T) = \bold{b}_n \mathds{1}(\widehat{T} \neq T) + \bold{g}_n \mathds{1}(\widehat{T} = T) \xrightarrow[]{\ d \ } \mathds{N}(0, \sigma^2 \bold{\Sigma}_T).
\]
\end{proof}


\bibliographystyle{imsart-nameyear}

\bibliography{DMR}

\begin{thebibliography}{17}

\bibitem[\protect\citeauthoryear{Bondell and Reich}{2009}]{bondell}
\begin{barticle}[author]
\bauthor{\bsnm{Bondell},~\bfnm{Howard~D}\binits{H.~D.}} \AND
  \bauthor{\bsnm{Reich},~\bfnm{Brian~J}\binits{B.~J.}}
(\byear{2009}).
\btitle{Simultaneous factor selection and collapsing levels in ANOVA}.
\bjournal{Biometrics}
\bvolume{65}
\bpages{169--177}.
\end{barticle}
\endbibitem

\bibitem[\protect\citeauthoryear{Cali\'nski and Corsten}{1985}]{calinski}
\begin{barticle}[author]
\bauthor{\bsnm{Cali\'nski},~\bfnm{T}\binits{T.}} \AND
  \bauthor{\bsnm{Corsten},~\bfnm{LCA}\binits{L.}}
(\byear{1985}).
\btitle{Clustering means in ANOVA by simultaneous testing}.
\bjournal{Biometrics}
\bpages{39--48}.
\end{barticle}
\endbibitem

\bibitem[\protect\citeauthoryear{Chen and Chen}{2008}]{chenchen}
\begin{barticle}[author]
\bauthor{\bsnm{Chen},~\bfnm{Jiahua}\binits{J.}} \AND
  \bauthor{\bsnm{Chen},~\bfnm{Zehua}\binits{Z.}}
(\byear{2008}).
\btitle{Extended Bayesian information criteria for model selection with large
  model spaces}.
\bjournal{Biometrika}
\bvolume{95}
\bpages{759--771}.
\end{barticle}
\endbibitem

\bibitem[\protect\citeauthoryear{Ciampi et~al.}{2008}]{ciampi}
\begin{barticle}[author]
\bauthor{\bsnm{Ciampi},~\bfnm{Antonio}\binits{A.}},
  \bauthor{\bsnm{Lechevallier},~\bfnm{Yves}\binits{Y.}},
  \bauthor{\bsnm{Limas},~\bfnm{Manuel~Castej{\'o}n}\binits{M.~C.}} \AND
  \bauthor{\bsnm{Marcos},~\bfnm{Ana~Gonz{\'a}lez}\binits{A.~G.}}
(\byear{2008}).
\btitle{Hierarchical clustering of subpopulations with a dissimilarity based on
  the likelihood ratio statistic: application to clustering massive data sets}.
\bjournal{Pattern Analysis and Applications}
\bvolume{11}
\bpages{199--220}.
\end{barticle}
\endbibitem

\bibitem[\protect\citeauthoryear{Dayton}{2003}]{dayton03}
\begin{barticle}[author]
\bauthor{\bsnm{Dayton},~\bfnm{C~Mitchell}\binits{C.~M.}}
(\byear{2003}).
\btitle{Information criteria for pairwise comparisons.}
\bjournal{Psychological Methods}
\bvolume{8}
\bpages{61--71}.
\end{barticle}
\endbibitem

\bibitem[\protect\citeauthoryear{Efron et~al.}{2004}]{lar}
\begin{barticle}[author]
\bauthor{\bsnm{Efron},~\bfnm{Bradley}\binits{B.}},
  \bauthor{\bsnm{Hastie},~\bfnm{Trevor}\binits{T.}},
  \bauthor{\bsnm{Johnstone},~\bfnm{Iain}\binits{I.}} \AND
  \bauthor{\bsnm{Tibshirani},~\bfnm{Robert}\binits{R.}}
(\byear{2004}).
\btitle{Least angle regression}.
\bjournal{The Annals of Statistics}
\bvolume{32}
\bpages{407--499}.
\end{barticle}
\endbibitem

\bibitem[\protect\citeauthoryear{Gertheiss and Tutz}{2010}]{tutz}
\begin{barticle}[author]
\bauthor{\bsnm{Gertheiss},~\bfnm{Jan}\binits{J.}} \AND
  \bauthor{\bsnm{Tutz},~\bfnm{Gerhard}\binits{G.}}
(\byear{2010}).
\btitle{Sparse modeling of categorial explanatory variables}.
\bjournal{The Annals of Applied Statistics}
\bvolume{4}
\bpages{2150--2180}.
\end{barticle}
\endbibitem

\bibitem[\protect\citeauthoryear{Oelker, Gertheiss and Tutz}{2012}]{tutzGLM}
\begin{barticle}[author]
\bauthor{\bsnm{Oelker},~\bfnm{Margret-Ruth}\binits{M.-R.}},
  \bauthor{\bsnm{Gertheiss},~\bfnm{Jan}\binits{J.}} \AND
  \bauthor{\bsnm{Tutz},~\bfnm{Gerhard}\binits{G.}}
(\byear{2012}).
\btitle{Regularization and Model Selection with Categorial Predictors and
  Effect Modifiers in Generalized Linear Models}.
\bjournal{Department of Statistics, University of Munich}.
\end{barticle}
\endbibitem

\bibitem[\protect\citeauthoryear{Pokarowski and Mielniczuk}{2013}]{PaM}
\begin{barticle}[author]
\bauthor{\bsnm{Pokarowski},~\bfnm{Piotr}\binits{P.}} \AND
  \bauthor{\bsnm{Mielniczuk},~\bfnm{Jan}\binits{J.}}
(\byear{2013}).
\btitle{Combined l\_1 and greedy l\_0 penalized least squares for linear model
  selection}.
\bjournal{arXiv preprint arXiv:1310.6062}.
\end{barticle}
\endbibitem

\bibitem[\protect\citeauthoryear{Porreca and Ferrari-Trecate}{2010}]{porreca}
\begin{barticle}[author]
\bauthor{\bsnm{Porreca},~\bfnm{Riccardo}\binits{R.}} \AND
  \bauthor{\bsnm{Ferrari-Trecate},~\bfnm{Giancarlo}\binits{G.}}
(\byear{2010}).
\btitle{Partitioning datasets based on equalities among parameters}.
\bjournal{Automatica}
\bvolume{46}
\bpages{460--465}.
\end{barticle}
\endbibitem

\bibitem[\protect\citeauthoryear{Scott and Knott}{1974}]{scottknott}
\begin{barticle}[author]
\bauthor{\bsnm{Scott},~\bfnm{AJ}\binits{A.}} \AND
  \bauthor{\bsnm{Knott},~\bfnm{M}\binits{M.}}
(\byear{1974}).
\btitle{A cluster analysis method for grouping means in the analysis of
  variance}.
\bjournal{Biometrics}
\bpages{507--512}.
\end{barticle}
\endbibitem

\bibitem[\protect\citeauthoryear{Tibshirani}{1996}]{lasso}
\begin{barticle}[author]
\bauthor{\bsnm{Tibshirani},~\bfnm{Robert}\binits{R.}}
(\byear{1996}).
\btitle{Regression shrinkage and selection via the lasso}.
\bjournal{Journal of the Royal Statistical Society. Series B (Methodological)}
\bpages{267--288}.
\end{barticle}
\endbibitem

\bibitem[\protect\citeauthoryear{Tibshirani et~al.}{2004}]{fusedlasso}
\begin{barticle}[author]
\bauthor{\bsnm{Tibshirani},~\bfnm{Robert}\binits{R.}},
  \bauthor{\bsnm{Saunders},~\bfnm{Michael}\binits{M.}},
  \bauthor{\bsnm{Rosset},~\bfnm{Saharon}\binits{S.}},
  \bauthor{\bsnm{Zhu},~\bfnm{Ji}\binits{J.}} \AND
  \bauthor{\bsnm{Knight},~\bfnm{Keith}\binits{K.}}
(\byear{2004}).
\btitle{Sparsity and smoothness via the fused lasso}.
\bjournal{Journal of the Royal Statistical Society: Series B (Statistical
  Methodology)}
\bvolume{67}
\bpages{91--108}.
\end{barticle}
\endbibitem

\bibitem[\protect\citeauthoryear{Tukey}{1949}]{tukey}
\begin{barticle}[author]
\bauthor{\bsnm{Tukey},~\bfnm{John~W}\binits{J.~W.}}
(\byear{1949}).
\btitle{Comparing individual means in the analysis of variance}.
\bjournal{Biometrics}
\bpages{99--114}.
\end{barticle}
\endbibitem

\bibitem[\protect\citeauthoryear{Zhang}{2010}]{zhang2010}
\begin{barticle}[author]
\bauthor{\bsnm{Zhang},~\bfnm{Tong}\binits{T.}}
(\byear{2010}).
\btitle{Analysis of multi-stage convex relaxation for sparse regularization}.
\bjournal{The Journal of Machine Learning Research}
\bvolume{11}
\bpages{1081--1107}.
\end{barticle}
\endbibitem

\bibitem[\protect\citeauthoryear{Zheng and Loh}{1995}]{loh}
\begin{barticle}[author]
\bauthor{\bsnm{Zheng},~\bfnm{Xiaodong}\binits{X.}} \AND
  \bauthor{\bsnm{Loh},~\bfnm{Wei-Yin}\binits{W.-Y.}}
(\byear{1995}).
\btitle{Consistent variable selection in linear models}.
\bjournal{Journal of the American Statistical Association}
\bvolume{90}
\bpages{151--156}.
\end{barticle}
\endbibitem

\bibitem[\protect\citeauthoryear{Zou and Li}{2008}]{ZouLi}
\begin{barticle}[author]
\bauthor{\bsnm{Zou},~\bfnm{Hui}\binits{H.}} \AND
  \bauthor{\bsnm{Li},~\bfnm{Runze}\binits{R.}}
(\byear{2008}).
\btitle{One-step sparse estimates in nonconcave penalized likelihood models}.
\bjournal{Annals of statistics}
\bvolume{36}
\bpages{1509}.
\end{barticle}
\endbibitem

\end{thebibliography}

\end{document}